\newcommand{\tabincell}[2]{\begin{tabular}{@{}#1@{}}#2\end{tabular}}
\newtheorem{theorem}{Theorem}[section]
\newtheorem{assumption}{Assumption}[section]
\newtheorem{proposition}{Proposition}[section]
\newtheorem{remark}{Remark}[section]
\newtheorem*{proof}{Proof}[section]
\newtheorem{lemma}{Lemma}[section]
\renewcommand{\thealgorithm}{\arabic{section}.\arabic{algorithm}}
\newcommand{\abs}[1]{\left\vert#1\right\vert}
\newcommand{\R}{\mathbb R}
\newcommand{\N}{\mathbb N}
\newcommand{\eqn}[1]{(\ref{#1})}
\newcommand{\sign}[1]{\mathrm{sgn}(#1)}
\newcommand{\ba}[1]{\begin{array}{#1}}
\newcommand{\ea}{\end{array}}
\newcommand{\beq}{\begin{equation}}
\newcommand{\eeq}{\end{equation}}
\newcommand{\beqar}{\begin{eqnarray}}
\newcommand{\eeqar}{\end{eqnarray}}
\newcommand{\bpm}{\begin{pmatrix}}
\newcommand{\epm}{\end{pmatrix}}
\def\comp{\ensuremath\mathop{\scalebox{.6}{$\circ$}}}
\newenvironment{breakablealgorithm}
  {% \begin{breakablealgorithm}
   \begin{center}
     \refstepcounter{algorithm}% New algorithm
     \hrule height.8pt depth0pt \kern2pt% \@fs@pre for \@fs@ruled
     \renewcommand{\caption}[2][\relax]{% Make a new \caption
       {\raggedright\textbf{\ALG@name~\thealgorithm} ##2\par}%
       \ifx\relax##1\relax % #1 is \relax
         \addcontentsline{loa}{algorithm}{\protect\numberline{\thealgorithm}##2}%
       \else % #1 is not \relax
         \addcontentsline{loa}{algorithm}{\protect\numberline{\thealgorithm}##1}%
       \fi
       \kern2pt\hrule\kern2pt
     }
  }{% \end{breakablealgorithm}
     \kern2pt\hrule\relax% \@fs@post for \@fs@ruled
   \end{center}
  }
\title{Enhanced Contour Tracking: a Time-Varying Internal Model Principle-Based Approach}
\author{
 Yue Cao \\
  Department of Mechanical Engineering\\
  Tsinghua University\\
  Beijing 100084, China\\
  %% examples of more authors
   \And
  Zhen Zhang* \\
  Department of Mechanical Engineering\\
  Tsinghua University\\
  Beijing 100084, China\\
  \texttt{zzhang@tsinghua.edu.cn} \\
}
\begin{document}
\maketitle
\begin{abstract}
Contour tracking plays a crucial role in multi-axis motion control systems, and it requires both multi-axial contouring as well as standard servo performance in each axis. Among the existing contouring control methods, the cross coupled control (CCC) lacks of an asymptotical tracking performance for general contours, and the task coordinate frame (TCF) control usually leads to system nonlinearity, and by design is not well-suited for multi-axis contour tracking. Here we propose a novel time-varying internal model principle-based contouring control (TV-IMCC) methodology to enhance contour tracking performance with both axial and contour error reduction. The proposed TV-IMCC is twofold, including an extended position domain framework with master-slave structures for contour regulation, and a time-varying internal model principle-based controller for each axial tracking precision improvement. Specifically, a novel signal conversion algorithm is proposed with the extended position domain framework, hence the original \boldmath$n$-axis contouring problem can be decoupled into (\boldmath$n-1$) two-axis master-slave tracking problems in the position domain, and the class of contour candidates can be extended as well. With this, the time-varying internal model principle-based control method is proposed to deal with the time-varying dynamics in the axial systems resulted from the transformation between the time and position domains. Furthermore, the stability analysis is provided for the closed-loop system of the TV-IMCC. Various simulation and experimental results validate the TV-IMCC with enhanced contour tracking performance compared with the existing  methods. Moreover, there is no strict requirement on the precision of the master axis, therefore a potential application of the TV-IMCC is multi-axis macro-micro motion systems.\\~\\
Index Terms—Precision motion control, contouring, contour tracking, internal model principle, time-varying systems, position domain, synchronized motion systems
\end{abstract}

% keywords can be removed
%\keywords{First keyword \and Second keyword \and More}

\section{Introduction}\label{intro}
Contour tracking, which is a popular and significant research direction in the field of precision mechatronics, has important applications to precision motion control for a scheduled two-/multi-axis contour, including machine tool processing~\cite{2li2020}, robot manipulation~\cite{wang2011}, and micro/nano machining~\cite{hu2011}. The contouring control has also emerged as an effective method to deal with macro-micro motion systems, such as the dual-drive stage system~\cite{yang2010} and the laser galvoscanner-stage system~\cite{cui2021}, to achieve both large working range and high precision. In the above applications, the contour to be tracked is composed of references in multiple motion dimensions, and the motion subsystems for all dimensions should be controlled simultaneously to ensure the contouring performance.

%In the past several decades, significant efforts have been devoted to the study of contour tracking, and numerous contour tracking control methods have been proposed and developed.
A basic idea to improve contour tracking precision is to achieve high tracking performance of each individual axis. And there are considerable advanced control methods available for axial precision tracking, such as adaptive control~\cite{farzan2022,fan2021}, sliding mode control (SMC)~\cite{qiao2019,boudjedir2022}, and iterative learning control (ILC)~\cite{tc2019,chen2021}. Nevertheless, it is worth noting that the improvement of axial tracking errors does not always imply reduced contour errors, because the contour error is a complex composition of all the axial individual errors. According to~\cite{tang2013}, the contour error can be defined as the minimum distance from the actual position to the reference contour. Clearly, the main task of a contouring method is the reduction or elimination of the contour error.

In order to tackle the above challenge, numerous contouring control methods have been proposed. First of all, the well-known cross-coupled control (CCC) was originally proposed in~\cite{koren1980,koren1991} to deal with basic contours such as lines, circles and their combinations. Thereafter, efforts have been made in~\cite{yeh2003,cheng2009} to extend the types of contours that can be dealt with CCC. Recently, the CCC and its modifications proposed in~\cite{barton2008,barton2011,2yang2010,uchiyama2011,ghaffari2015,zhang2018} (such as embedded ILC methods, modified error estimation methods, and model predictive methods) have been widely used due to the straightforward implementation. As the CCC lacks an asymptotical tracking performance for general contours, the contouring results may deteriorate in high-speed and/or large-curvature scenarios~\cite{ghaffari2015}.

Except of CCC, some other contour control strategies have been proposed to reduce the contour tracking errors in a different approach. One thread is the coordinate transformation-based method. For example, the tangential-contouring control (TCC)~\cite{lo1999} decouples the error dynamics into tangential errors and contour errors in the task coordinate frame (TCF), hence the control of the decoupled dynamics can be implemented separately. The TCF-based methods were modified in~\cite{chiu2001} for three-axis systems and for transformation matrix optimization in~\cite{cheng2007}. Furthermore, considering the nonlinearities caused by the TCF, the global task coordinate frame (GTCF) was developed in~\cite{yuan2016,hu2016,chen2019} to track the contours in relative high precision when combined with the adaptive control method. Nevertheless, the transformation matrix of the TCF methods becomes too complex to solve in the case of high dimensional motion axes, and the nonlinearities in the TCF setting restrict the potential axial controller integration for better tracking precision~\cite{tang2013}.

Another thread is the position domain control (PDC) method~\cite{ouyang2012}. In this method, the index variable of control is transformed from the temporal variable to the axial displacement variable. The PDC is able to completely decouple the contour error into the error dynamics of each individual axis in position domain. Hence, it is suitable for high dimensional contour situation~\cite{ouyang2013}. However, this method only utilizes PID as its axial controller, which means that the axial tracking performance may not be ensured even with a well-tuning process. And if the reference of the master axis (whose position is chosen as the position domain variable) is not monotonic with respect to the temporal variable $t$, stabilization can be only achieved piecewise by dividing the reference into sections to ensure monotonicity of each section~\cite{ouyang2012}.

Inspired by the existing contouring methods, we, in this paper, pursue further improvement in contour tracking performance by designing both contouring method and axial tracking method with enhanced precision. To achieve this goal, we first propose a novel and extended position domain framework, which ensures the multi-axis contouring performance by decoupling, and enables various classes of contours to be tracked. Then, based on the extended position domain framework, we find that the time-varying feature of the system dynamics is generated from the transformation between position domain and time domain. To tackle this issue, the time-varying internal model principle-based contouring control (TV-IMCC) is proposed to achieve asymptotic tracking for individual axis. It is known that he internal model principle (IMP) is a powerful tool for asymptotically tracking references and/or rejecting disturbances generated by autonomous systems~\cite{fran1977,isidori1990,serrani2001,galeani2011}. Meanwhile, some studies of the linear time-varying (LTV) IMP have been proposed in~\cite{zhang2009,zhang2010,zhang2014}. In this paper, the LTV IMP is deployed with the position domain framework to achieve improvement in both axial and multi-axial contouring precision.

The detailed contributions of this work are listed as follows:
\begin{itemize}
\item The multi-axis contouring control problem is transformed to decoupled axial time-varying tracking control by means of the proposed extended position domain framework, and the TV-IMCC is designed for each axis to achieve asymptotical tracking performance.
\item A novel signal conversion algorithm is designed to embed the rotational references of the master axis into the proposed extended position domain framework, and hence to enable the extension of the class of contour candidates.
\item A novel LMI-based time-varying stabilizer is designed by constructing a polytopic system state in the position domain for the proposed TV-IMCC.
\end{itemize}

The rest of the paper is organized as follows: in Section~\ref{prel}, the contouring system is described in the time domain and then decoupled in the position domain. To extend the classes of contours to be tracked, a novel position domain framework is designed in Section~\ref{frame}. The modeling of the master-slave contouring system is provided in Section~\ref{sec:slaveaxis}. The TV-IMCC including its stabilizer is proposed in Section~\ref{design}. Numerous simulation and experimental results are provided, validating the outperformed tracking results of the TV-IMCC in Section~\ref{exp}, followed by conclusion and future work in Section~\ref{conclusion}. For the convenience of the reader, we provide Fig.~\ref{guideline} to show a logical flow of the proposed TV-IMCC contouring method.
\begin{figure*}[htbp]
  \centering
    \includegraphics[width=\hsize]{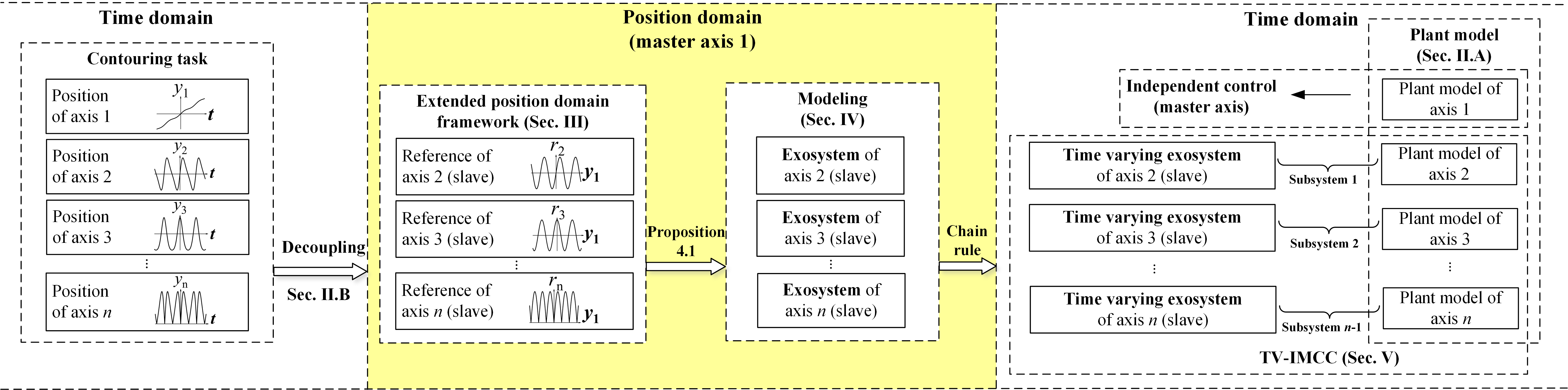}\\
  \caption{The logical flow of the proposed TV-IMCC method for multi-axis contouring.}\label{guideline}
\end{figure*}

\section{System description in time/position domain}\label{prel}
In this section, the precision contouring system is firstly described in the time domain. Then the position domain description of the contouring task is presented to decouple the original multi-axis system into multiple two-axis subsystems.
\subsection{System description in time domain}
\begin{figure}[!ht]
  \centering
  \includegraphics[width=0.7\hsize]{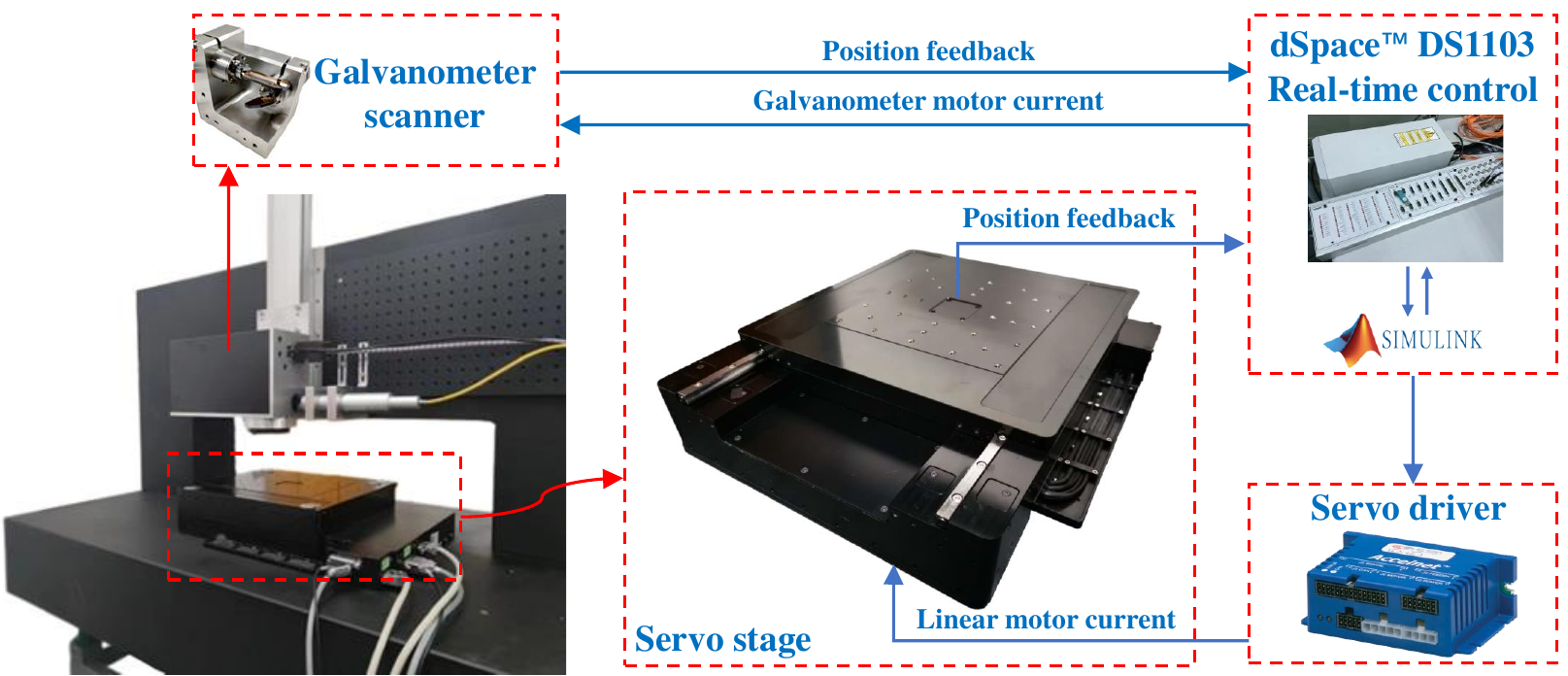}\\
  \caption{A self-developed galvoscanner-stage synchronized motion system.}\label{stage}
\end{figure}
The existing precision contouring tasks are implemented on a single motion system. A representative example is from the laser precision manufacturing industry, where either a motion stage or a galvanometer scanner (an electro-optical device utilizing a rotary low-inertia mirror to position a laser beam with high precision) is used for machining large-scale or delicate parts. Recently, there is a trend to combine multiple motion systems, so that the working area is enlarged by the motion stage, and the tracking precision is ensured with the scanner's fast and accurate motion. A straightforward way is the step-and-scan method, where the motion of stage and scanner is alternate. However, the step-and-scan method leads to stitching errors and low efficiency in manufacturing. More recently, the synchronized scan method is developed to enhance the scanning precision and efficiency, where the motion of stage and scanner is simultaneous (See the XL SCAN method of Scanlab$^\text{TM}$~\cite{scanlab}).

It is important to note that the synchronization of multiple motion systems leads to increased number of the axes. The existing contour tracking methods have difficulties in multi-axial contour error calculation and/or coordinate transformation, because the contouring task is considered as a coupled problem. Meanwhile, the PDC method is suitable for multi-axis contouring tasks, while the precision of its axial controller may not be satisfied. To tackle the above difficulties, we propose a completely decoupled contouring control methodology with asymptotic tracking performance.

To begin with, we consider a multi-axis contouring system, of which the nominal axial plant models are in the following form
\begin{equation}\label{si}
\begin{split}
\dot x_i(t)&=A_ix_i(t)+B_iu_i(t)\\[2mm]
y_i(t)&=C_ix_i(t)\,,
\end{split}
\end{equation}
where $i$ is the index of different axes; $x_i\in\R^n$ is the system state; $u_i\in\R$ is the control input; and $y_i\in\R$ is the position output.
\begin{assumption}
There is no coupling between each axis, which guarantees the existence of systems~(\ref{si}); the tuples $(A_i, B_i, C_i)$ are controllable and observable, and the systems~(\ref{si}) is asymptotically stable (if not, one can stabilize it).
\end{assumption}

In order to completely decouple the multi-axis contouring task, a decoupling method is designed in position domain, and hence the contouring control of the systems~(\ref{si}) can be generalized to a system with any number of motion axes.
\subsection{Decoupling multi-axis contouring task in position domain}\label{pddes}
%In the decoupling, the initial work is to decouple the multi-axial contouring control problem into the axial tracking control problem, so that the calculation can be significantly simplified.
The decoupling of the multi-axis contouring task is enabled via the transformation of the system dynamics from the time to position domain. We illustrate the multi-axis decoupling by starting from a two-axis case, where the contouring task can be completely decoupled into the independent master axis control, and the slave axis control in the master-axis-based position domain~\cite{ouyang2012}. Then, a multi-axis case can be similarly handled by dividing the multiple axes into one master axis and other slave ones,
\begin{align*}\label{multiaxis}
\begin{array}{rcl}
&&~~r_{\rm 1}=g_1(t)\,,\qquad\qquad\qquad\hspace{-.15mm} \rightarrow {\rm Master~axis}\\[1mm]
&&\left.
\begin{array}{c}
r_{\rm 2}=g_2(t)=f_2(y_{\rm 1})\,,\\[1mm]
r_{\rm 3}=g_3(t)=f_3(y_{\rm 1})\,,\\[1mm]
\vdots\\[1mm]
r_{\rm n}=g_{\rm n}(t)=f_{\rm n}(y_{\rm 1})\,,
\end{array}
\right\}~\rightarrow {\rm Slave~axes}
\end{array}
\end{align*}
where $n\in \N$ is the number of the axes, and $r_i~(i=1,2,\cdots,n)$ represent the axial contouring references, and $y_{\rm 1}$ represents the position of the master axis. The functions $g_i(\cdot)~(i=1,2,\cdots,n)$ represent the expressions of $r_i$ in time domain, while the functions $f_i(\cdot)~(i=2,3,\cdots,n)$ represent those in the position domain. Indeed, $r_i=f_i(y_1)$ is an implicit function of the two-variable function $F_i(y_1,\,r_i)=0$, which describes the characteristics of the contour, i.e., the relationship between the master axis and each slave axis.

Note that with the increased number of slave axes, the only burden of control is to apply the same tracking controller to the increased slave axis. Consequently, the multi-axis system can be divided into subsystems, and each of them has an individual slave axis, while all of them share the same master axis. More precisely, an $n$-dimensional contour tracking can be treated as a combination of $(n-1)$ numbers of two-dimensional contours. That is why the position domain framework is well suited for multi-axis contour tracking.

The decoupling is completed by modeling every two-axis subsystem in the position domain. Without loss of generality, we can mainly work on one two-axis contouring subsystem, containing one master axis (with subscript $1$) and one slave axis (with subscript $2$) for the ease of the notation burden and presentation in the rest of the paper.
\begin{remark}
\label{choose}
In a multi-axis motion system, it is preferred to choose the axis of the lowest motion precision as the master axis. It is because the control of the slave axis in position domain is capable of compensating the contour errors caused by the master axis.
\end{remark}

To extend the class of contours to be tracked, we study the expressions of contour signals in position domain, and design an extended position domain framework in what follows.
\section{Extended position domain framework}\label{frame}
Note that in the existing PDC~\cite{ouyang2012}, the position of the master axis is required to be a strictly monotonic function of the temporal variable $t$, otherwise the chain rule does not apply when the derivative is zero. However, a wide class of contours in industrial applications consists of non-monotonic rotational profiles (rectangles, circles, or irregular closed shapes with varying speed, acceleration and large curvature). Rotational signals are more general than periodic ones because they allow variations in every cycle of the contour. In fact, most of the rotational contour references can be mapped to monotonic functions with a virtual angular variable. Therefore, we extend the position domain framework, so that for a general two-dimensional contour, if at least one of the following assumptions is satisfied, the contour can be monotonized and then adopted into the extended position domain framework.
\begin{assumption}
\label{a1}
(Monotonic signals) The position of the master axis $y_1(t)$ is strictly monotonic with respect to the temporal variable $t$ ($\dot y_1(t)>0$ for all $t$, or  $\dot y_1(t)<0$ for all $t$), and the slave axis reference $r_2$ can be described as a differentiable function of $y_1$
\begin{equation}\label{a11}
r_2 = f(y_1)\,.
\end{equation}
\end{assumption}

An illustrative example of Assumption~\ref{a1} is given as $r_2=f(y_1)\triangleq \sin y_1$, which indicates a sinusoidal wave contour of amplitude $A=1$ and frequency $\omega=1$.
%which indicates a sinusoidal function of $x_1$.

\begin{assumption}
\label{a2}
(Rotational signals) The position of the master axis $y_1(t)$ can be transformed to a differentiable function of rotational variable $\theta(t)$ as
\begin{equation}\label{a21}
y_1(t)=(g_1\comp\theta)(t)\,,
\end{equation}
where the sign ($\comp$) denotes the composition of two functions, and $\theta(t)$ is strictly monotonic with respect to the temporal variable $t$. %, e.g., $\theta(t)\triangleq2t$, $g_1(\theta)\triangleq\cos \theta$, and $y_1(t)=(g_1\comp\theta)(t)=\cos 2t$.
The rotational variable $\theta(t)$ can be obtained through an inverse function of $y_1(t)$, that is
\begin{equation}\label{a22}
\theta(t)=g_1^{-1}(y_1(t))\,,
\end{equation}
%e.g., $\theta(t)=\cos^{-1}(\cos(2t))=2t$.
Further, the reference of slave axis  $r_2$ can be described as a differentiable function of $\theta$, i.e.
\begin{equation}\label{a23}
r_2=f(\theta)\,.
\end{equation}
\end{assumption}

An illustrative example of Assumption~\ref{a2} is given as $\theta(t)\triangleq t+0.5\sin t$, $g_1(\theta)\triangleq\cos \theta$, $y_1(t)=(g_1\comp\theta)(t)=\cos (t+0.5\sin t)$, and $r_2=f(\theta)\triangleq\sin\theta$. The shape of the contour ($y_1=\cos\theta=\cos (t+0.5\sin t)$, $r_2=\sin\theta=\sin (t+0.5\sin t)$) is a circle of radius $r=1$.

\begin{remark}
Note that the direct inverse function of $y_1(t)$ in equation~\eqn{a22} may not exist (e.g., $\cos^{-1}(\cdot)\neq \arccos(\cdot)$) for that the monotonicity of $y_1(t)$ does not hold, thus we need extra efforts to obtain $\theta(t)$ as follows.
\end{remark}

To solve function~(\ref{a22}) for the monotonic variable $\theta$ in the case of Assumption~\ref{a2}, we propose a convenient approach. Generally, variable $\theta$ can be recognized as the angular variable for a rotational signal, and the function~(\ref{a21}) maps $\theta$ to the actual position $y_1$. Note that the pair $(g_1,\theta)$ is not unique, here we choose the sinusoidal form of $g_1$ for calculation convenience. That is, $y_1$ can be described in the following form as
\begin{equation}\label{r1}
y_1(t)=R\cos\theta(t)
\end{equation}
in the continuous time domain, or with $t=kT_{\rm s}$ in the discrete-time domain,
\begin{equation}\label{rr1}
y_1(k)=R\cos\theta(k)\,,
\end{equation}
where $k$ is the sampling index, and $T_{\rm s}$ is the sampling period, and $R=\sup(\abs{y_1(t)})$. Then function~(\ref{a22}) can be solved by Algorithm~\ref{alg1}.
\begin{breakablealgorithm}
\caption{\small Conversion from a rotational variable $y_1$ to a monotonic angular variable $\theta_{\rm e}$}
\textbf{Initialize:} $s=0,~k=1$
\begin{algorithmic}[1]
\Repeat
    \State {$\theta_{\rm e}(k)=s\pi+\frac{\pi}{2}(1-(-1)^s)+(-1)^s\arccos (\frac{y_1(k)}{R})$}
    \If {$\theta_{\rm e}(k)<\theta_{\rm e}(k-1)$}
    \State {$s=s+1$}
    \State {$\theta_{\rm e}(k)=s\pi+\frac{\pi}{2}(1-(-1)^s)+(-1)^s\arccos (\frac{y_1(k)}{R})$}
    \EndIf
    \State \Return {$\theta_{\rm e}(k)$}
    \State {$k=k+1$}
\Until the reference signal ends
\end{algorithmic}
\label{alg1}
\end{breakablealgorithm}
%\begin{algorithm}[h]
%\label{alg1}
%\caption{\small Solving function~(\ref{a22})}
%\textbf{Initialize:} $s=0,~\theta_p=0$
%\begin{algorithmic}[1]
%\State {$\theta=s\pi+\frac{\pi}{2}(1-(-1)^s)+(-1)^s\arccos (\frac{x_m(k)}{R})$}
%\If {$(|x_m(k)-x_m(k-\delta)|<\varepsilon)$ and $(\theta-\theta_p>\sigma)$}
%\State {$s=s+1$}
%\State {$\theta=s\pi+\frac{\pi}{2}(1-(-1)^s)+(-1)^s\arccos (\frac{x_m(k)}{R})$}
%\State {$\theta_p=\theta$}
%\EndIf
%\State \Return {$\theta$}
%\end{algorithmic}
%\end{algorithm}
~\\[-5mm]
\begin{remark}
Algorithm~\ref{alg1} deals with the situation where $\theta(t)$ is strictly monotonically increasing. If $\theta(t)$ is strictly monotonically decreasing, the criterion of step 3 in Algorithm~\ref{alg1} will be $\theta_{\rm e}(k)>\theta_{\rm e}(k-1)$.% Without loss of generality, $\theta(t)$ is supposed to be monotonically increasing hereafter.
\end{remark}

To verify Algorithm~\ref{alg1}, the following lemma is provided.
\begin{lemma}\label{l1}
For $\alpha \in [s\pi,(s+1)\pi]$ ($s \in \N$), the following equation holds
\begin{equation}\label{ls1}
\alpha=s\pi+\frac{\pi}{2}(1-(-1)^s)+(-1)^s\arccos(\cos\alpha)\,.
\end{equation}
\end{lemma}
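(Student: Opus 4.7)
The plan is to verify the identity by parametrizing $\alpha$ as $\alpha = s\pi + \beta$ with $\beta \in [0,\pi]$, and then splitting into two cases depending on the parity of $s$. This reduces the claim to checking that $\arccos(\cos\alpha)$ equals $\beta$ when $s$ is even and $\pi - \beta$ when $s$ is odd, which is exactly what the factors $\frac{\pi}{2}(1-(-1)^s)$ and $(-1)^s$ are engineered to produce.

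First I would write $\beta := \alpha - s\pi \in [0,\pi]$ and exploit $2\pi$-periodicity of cosine together with $\cos(\pi + x) = -\cos x$. In the even case $s = 2k$, one has $\cos\alpha = \cos\beta$, so since $\beta \in [0,\pi]$ (the range on which $\arccos$ is a genuine inverse of $\cos$), the principal value satisfies $\arccos(\cos\alpha) = \beta$. Plugging into the right-hand side of \eqn{ls1} with $(-1)^s = 1$ gives $s\pi + 0 + \beta = \alpha$. In the odd case $s = 2k+1$, one has $\cos\alpha = -\cos\beta = \cos(\pi - \beta)$, and since $\pi - \beta \in [0,\pi]$, we obtain $\arccos(\cos\alpha) = \pi - \beta$. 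Substituting with $(-1)^s = -1$ yields $s\pi + \pi - (\pi - \beta) = s\pi + \beta = \alpha$, as required.

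The only delicate point is making sure that the argument of $\arccos$ falls inside the principal branch $[0,\pi]$ in each case, which is where the restriction $\beta \in [0,\pi]$ (equivalently $\alpha \in [s\pi,(s+1)\pi]$) is essential; outside this interval the identity would fail. I would present the argument as a short two-case computation with an explicit remark that the endpoints $\alpha = s\pi$ and $\alpha = (s+1)\pi$ are consistent across both cases (they give $\arccos(\cos\alpha) \in \{0,\pi\}$ and both formulas return $\alpha$), so that the statement is unambiguous at the boundary between consecutive intervals. No serious obstacle is anticipated; the lemma is essentially a bookkeeping identity for the many-to-one folding of cosine onto its principal branch.
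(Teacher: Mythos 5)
Your proof is correct and follows essentially the same route as the paper's: both write $\alpha = s\pi + p$ with $p \in [0,\pi]$, reduce $\cos\alpha$ to a value whose $\arccos$ lands on the principal branch, and verify the identity by parity of $s$ (the paper merely packages the two parity cases into the single expression $\cos\bigl(\frac{\pi}{2}(1-(-1)^s)+(-1)^s p\bigr)$ rather than writing them out separately). No gap; your explicit attention to the endpoints is a small bonus the paper omits.
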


\begin{proof}
See Appendix~\ref{apx1}.
\end{proof}

With equation~(\ref{ls1}) in Lemma~\ref{l1}, we are now in position to present the following proposition to show that $\theta_{\rm e}$ is a feasible solution to equation~\eqn{a22}.

\begin{proposition}\label{prop:solutioneqn5}
If $\theta_{\rm e}(k)$ is obtained by Algorithm~\ref{alg1}, then
\begin{equation}\label{t1}
\theta_{\rm e}(k)=\theta(k)\,.
\end{equation}
\end{proposition}
\begin{proof}
See Appendix~\ref{apx:solutioneqn5}.
\end{proof}
%\begin{remark}
%%The ideal values of $\delta,~\varepsilon$ and $\sigma$ are $\delta=1,~\varepsilon \rightarrow 0$ and $\sigma \rightarrow \pi$. {\bf The practical values should be adjusted by experimental results. Not precise}
%\end{remark}

To illustrate Algorithm~\ref{alg1}, an example is provided in Fig.~\ref{alf}, where the rotational variable $y_1$ in triangle waveform is transformed to $\theta$ by Algorithm~\ref{alg1}, and  $\theta$ is strictly monotonic with respect to the temporal variable $t$.

%It is clear from the above analysis that the slave axis reference $r_2$ in the form~(\ref{a11}) or~(\ref{a23}) can be represented by a single independent variable ($y_1$ or $\theta$) which is the position of the master axis. This means that the position domain signal $f(y_1)$ (or $f(\theta)$) contains full information of the contour reference. Therefore, the better axial tracking precision is achieved for the slave axis for such a signal, the better contouring precision can be obtained, as long as the position of the master axis $y_1$ (or $\theta$) is bounded.%if a tracking controller is appropriately designed
\begin{figure}[!ht]
  \centering
  \includegraphics[width=0.5\hsize]{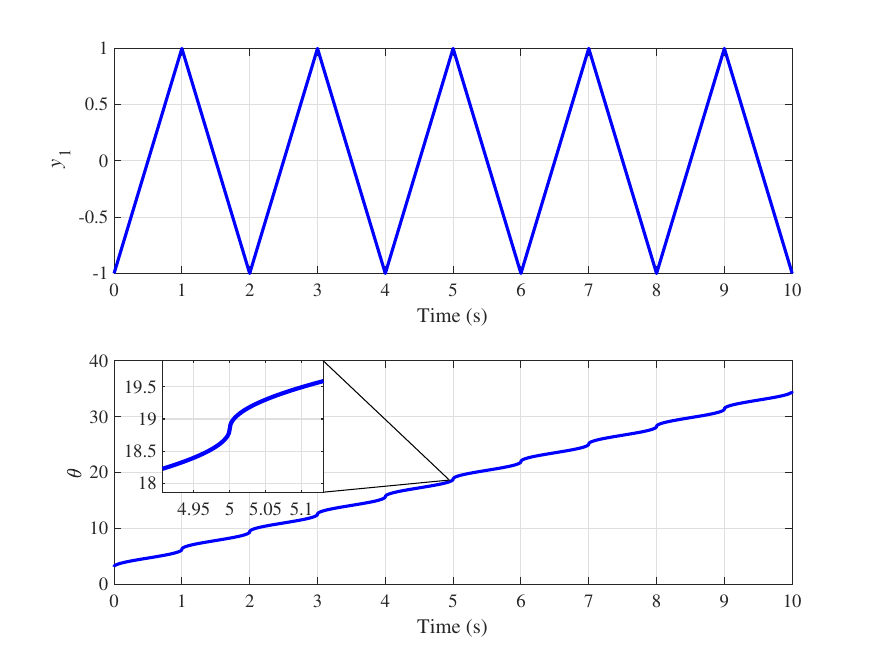}\\
  \caption{An illustrative example of Algorithm~\ref{alg1}.}\label{alf}
\end{figure}

We now summarize the features of the extended position domain framework by Proposition~\ref{pdcf}.%contours suitable for
\begin{proposition}\label{pdcf}
If Assumption~\ref{a1} or~\ref{a2} is satisfied, the contour reference dynamics can be analyzed in the extended position domain framework, namely, the derivative of the reference always exists in the position domain.
\end{proposition}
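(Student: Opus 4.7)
The plan is to split the argument into the two cases covered by the hypothesis and, in each, make the position-domain independent variable explicit and verify that $r_2$ is differentiable with respect to it. Concretely, the first step is to recall what ``position domain'' means in each setting: under Assumption~\ref{a1} it means parameterizing by $x_1$, whereas under Assumption~\ref{a2} the appropriate monotonic variable is $\theta$ (not $x_1$ itself, which is allowed to be non-monotone).

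Under Assumption~\ref{a1}, the argument is essentially immediate. Since $x_1(t)$ is strictly monotonic in $t$, its range is an interval and it qualifies as the position-domain variable. The slave reference is already given in the form~\eqn{a11} as $r_2 = f(x_1)$ with $f$ differentiable, so $\mathrm{d}r_2/\mathrm{d}x_1 = f'(x_1)$ exists pointwise in the range of $x_1$, which is exactly the claim.

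Under Assumption~\ref{a2}, the first substep is to argue that $\theta$ is a bona fide position-domain variable even though $x_1(t)$ is not monotone. For this I would invoke Algorithm~\ref{alg1} together with the preceding Proposition, which certifies that the algorithm's output $\theta_{\rm e}(k)$ coincides with $\theta(k)$ and is obtainable entirely from the measured $x_1$. Because $\theta(t)$ is strictly monotonic by hypothesis, $\theta$ plays the role that $x_1$ played in the first case. The slave reference is given in the form~\eqn{a23} as $r_2 = f(\theta)$ with $f$ differentiable, so $\mathrm{d}r_2/\mathrm{d}\theta = f'(\theta)$ exists everywhere in the range of $\theta$. Combining the two cases yields the proposition.

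The only potentially subtle point is the second case, where one must be convinced that the algorithmic reconstruction of $\theta$ really does furnish a single monotone position-domain coordinate consistent with the composition $x_1 = g_1 \comp \theta$; but that fact is precisely what the earlier Proposition (together with Lemma~\ref{l1}) establishes, so no new technical work is needed here. In that sense the present proposition is essentially a corollary: all the real content has been front-loaded into the validity of Algorithm~\ref{alg1}, and what remains is just to connect that validity to the differentiability requirement of the position-domain formulation.
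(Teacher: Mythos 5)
Your argument is correct, but it takes a different (and much more tautological) route than the paper's. You read the conclusion directly off the hypotheses: since Assumptions~\ref{a1} and~\ref{a2} literally posit that $r_2$ is a \emph{differentiable} function of $x_1$ (resp.\ $\theta$), the existence of $\mathrm{d}r_2/\mathrm{d}x_1$ (resp.\ $\mathrm{d}r_2/\mathrm{d}\theta$) is immediate, and the only work is certifying that $\theta$ is a legitimate monotone coordinate, which you delegate to Algorithm~\ref{alg1} and the reconstruction proposition. The paper instead proves the statement constructively via the chain rule, writing the position-domain derivative as in~\eqn{cr},
\[
\dfrac{{\rm d}f(x_1)}{{\rm d}x_1}=\dfrac{{\rm d}f(x_1(t))}{{\rm d}t}\cdot \left[\dfrac{{\rm d}x_1(t)}{{\rm d}t}\right]^{-1},
\]
and observing that strict monotonicity of $x_1(t)$ forces $\dot x_1(t)\neq 0$, so the inverse factor is well defined; the second case is handled by the one-line remark ``replace $x_1$ with $\theta$,'' without re-invoking Algorithm~\ref{alg1}. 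The paper's version is the more informative one: in practice the reference and plant are generated in the time domain, so the real content of the proposition is that the time-domain data can be converted to position-domain derivatives, and the chain-rule identity pinpoints exactly where the monotonicity assumption is consumed. Your version buys brevity but makes the proposition look vacuous and leaves the role of monotonicity implicit; if the assumptions were weakened to differentiability of $g_2(t)$ in time only, your proof would no longer apply while the paper's computation would still be the right tool.
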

\begin{proof}
If Assumption~\ref{a1} is satisfied, the contour reference can be described in the form of equation~(\ref{a11}). Hence, the derivative of the reference in position domain can be written as
\begin{equation}\label{cr}
\dfrac{{\rm d}f(y_1)}{{\rm d}y_1}=\dfrac{{\rm d}f(y_1(t))}{{\rm d}t}\cdot \left[\dfrac{{\rm d}y_1(t)}{{\rm d}t}\right]^{-1}
\end{equation}
followed by the chain rule. Note that $y_1$ is strictly monotonic in time domain according to Assumption~\ref{a1}, that is
\begin{equation*}\label{dx1g0}
\dfrac{{\rm d}y_1(t)}{{\rm d}t}\neq0.
\end{equation*}
Therefore, the quantity $\left[\dot y_1(t)\right]^{-1}$ always exists, and so does the derivative with respect to the position variable $y_1$ in~equation~(\ref{cr}).

Likewise, if Assumption~\ref{a2} is satisfied, the proof follows by replacing $y_1$ with $\theta$. \hfill $\square$
\end{proof}

With the above proposition, it is ready to model the master-slave contouring system in the next section. Hereafter, the variable $\theta$ is omitted and only $y_1$ is used to represent the output of the master axis for simplicity.
\section{Modeling of master-slave system}\label{sec:slaveaxis}
In the master-slave contouring system, the control of the master axis is independent. Note that the tracking of the master axis can be adjusted to guarantee the range of $r_1$ and $y_1$ are the same. As a result, the contouring error of the master axis can be compensated by the control of the slave axis in the extended position domain framework to ensure asymptotic tracking. Therefore, we concentrate on the modeling and control of the slave axis, because the output of the master axis $y_1$ can be made bounded by any appropriate axial controller, for example PID.
\begin{remark}
In practice, the output of the master axis $y_1$ can be adjusted by scaling the control input or shifting the final time of control, so that the range of $r_1$ and $y_1$ is the same. Namely, the contouring error of the master axis can be completely compensated by the control of the slave axis.
\end{remark}

With the above discussion, it remains to obtain the generating dynamics (exosystem) and the plant model of the slave axis in appropriate domains.
\subsection{Generating dynamics of the master-slave system}\label{exo}
In the time domain, a class of signals to be tracked/rejected can be generated by an exosystem of the form
\begin{equation}\label{exotime}
\begin{array}{rcl}
\dot w(t)&=&S(t)w(t)\\[1mm]
f(t)&=&Q(t)w(t)\,,
\end{array}
\end{equation}
where $w\in\R^{\rho}$ is the exogenous state, $f(t)$ is the output of the exosystem. Note that the external disturbances with known generating dynamics can be also handled in (\ref{exotime}) by augmenting the disturbance dynamics into the exosystem of the internal model framework. And the detailed results are referred to~\cite{zhang2010,zhang2014}.

Nevertheless, in the TV-IMCC, the independent variable of the exogenous state is the position of the master axis $y_1$, which is illustrated in Assumptions~\ref{a1} and~\ref{a2}. Thus, the exosystem of the TV-IMCC is reconstructed in the position domain as
\begin{equation}\label{exop}
\begin{array}{rcl}
w'(y_1)&=&S(y_1)w(y_1)\\[1mm]
f(y_1)&=&Q(y_1)w(y_1)\,,
\end{array}
\end{equation}
where
\[
w'(y_1)\triangleq\dfrac{{\rm d}w(y_1)}{{\rm d}y_1}\,,
\]
and the sign ($'$) is used to represent the position domain derivatives throughout the paper. The exosystem~(\ref{exop}) should contain the information of the contour reference to be tracked. Therefore, the parameters of the exosystem are identified by the position of the master axis $y_1$ and the reference of the slave axis $r_2 = f(y_1)$ in the following proposition.
\begin{proposition}\label{prop:exo}
The state space model of the exosystem $S(y_1)$ and $Q(y_1)$ can be described in the following form
\begin{equation}\label{exomodel}
\begin{array}{rcl}
S(y_1)&=&\begin{bmatrix} \dfrac{l'(y_1)}{l(y_1)} & -\eta'(y_1)\\\eta'(y_1) &\dfrac{l'(y_1)}{l(y_1)}\end{bmatrix}\\[10mm]
Q(y_1)&=&\begin{bmatrix} 1& 0 \end{bmatrix}\,,
\end{array}
\end{equation}
where $l(y_1)$ is defined as
\begin{equation}\label{l}
l(y_1)\triangleq\sqrt{y_1^2+f^2(y_1)}\,,
\end{equation}
and $\eta(y_1)$ is defined as
\begin{equation}\label{eta}
\eta(y_1) \triangleq \arccos\dfrac{f(y_1)}{\sqrt{y_1^2+f^2(y_1)}}\,.
\end{equation}
\end{proposition}
\begin{proof}
See Appendix~\ref{apx:exo}.
\end{proof}
\subsection{Problem formulation of the slave axis control}\label{prob}
With the constructed exosystem~(\ref{exop}) in the position domain, it is noted that the plant model of the slave axis is in the time domain according to~(\ref{si}). In order to keep the consistency of the domains, the exosystem~(\ref{exop}) is transformed into the time domain via the chain rule, and the time-varying parameters are introduced accordingly. As a result, the concerned system of the slave axis can be written in the time domain as follows
\begin{align}
\begin{split}
\label{plant}
\hspace{15mm}\dot x_2(t)&~=~A_2x_2(t)+B_2u_2(t)\,\\[1mm]
\hspace{15mm}y_2(t)&~=~C_2 x_2(t)\,\\[1mm]
\hspace{15mm}e_2(t)&~=~y_2(t)-f(t)\,,
\end{split}&
\\[2mm]
\begin{split}
\label{exoplantime}
\hspace{15mm}\dot w_y(t)&~=~S_{y}(t)w_y(t)\,\\[1mm]
\hspace{15mm}f_y(t)&~=~Q_{y}(t)w_y(t)\,,
\end{split}&
\end{align}
with system state $x_2\in\R^n$, control input $u_2\in\R$, output $y_2\in\R$, contour error $e_2\in\R$, reference $f\in\R$. The matrices $S(y_1)$ and $Q(y_1)$ are obtained via the chain rule as follows
\begin{equation}\label{SQtime}
\begin{array}{rcl}
S_{y}(t)&=&\dot y_1(t)S(y_1)=\begin{bmatrix}\dfrac{\dot l(y_1(t))}{l(y_1(t))} & -\dot \eta(y_1(t))\\ \dot \eta(y_1(t)) & \dfrac{\dot l(y_1(t))}{l(y_1(t))}\end{bmatrix}\,\\[10mm]
Q_{y}(t)&=&Q(y_1)=\begin{bmatrix} 1&0\end{bmatrix}\,,
\end{array}
\end{equation}
and $f_y(t)$ and $w_y(t)$ stand for the composite functions $(f\comp y_1)(t)$ and $(w\comp y_1)(t)$, respectively.
%\[
%\dot w_y(t)=\dfrac{{\rm d}w(y_1)}{{\rm d}y_1}\cdot\dfrac{{\rm d}y_1}{{\rm d}t}=w'(y_1)\cdot \dot y_1(t)\,,
%\]
%\[
%\dot \eta(y_1(t))=\dfrac{{\rm d}\eta(y_1)}{{\rm d}y_1}\cdot\dfrac{{\rm d}y_1}{{\rm d}t}=\eta'(y_1)\cdot \dot y_1(t)\,,
%\]
%and
%\[
%\dot l(y_1(t))=\dfrac{{\rm d}l(y_1)}{{\rm d}y_1}\cdot\dfrac{{\rm d}y_1}{{\rm d}t}=l'(y_1)\cdot \dot y_1(t)\,.
%\]

For the sake of digital control (Algorithm~\ref{alg1} is in a discrete version), the system model of interest is discretized from the continuous one~(\ref{plant})-(\ref{exoplantime}) as follows
\begin{align}
\begin{split}
\label{dplant}
\hspace{15mm}x_2(k+1)&=G_2x_2(k)+H_2u_2(k)\\[1mm]
\hspace{15mm}y_2(k)&=C_2x_2(k)\\[1mm]
\hspace{15mm}e_2(k)&=y_2(k)-f(k)\,,
\end{split}&
\\[2mm]
\begin{split}
\label{dexoplant}
\hspace{15mm}w(k+1)&=\bar S(k)w(k)\\[1mm]
\hspace{15mm}f(k)&=\bar Q(k)w(k)\,,
\end{split}&
\end{align}
where $t=kT_s$, and $k$ is the sampling index, and $T_s$ is the sampling period. Specifically, by noting that the commutative law of multiplication is satisfied for $S_y(t)$ in~(\ref{SQtime}), the discrete-time form $\bar S(k)$ can be calculated by $\bar S(k)={\rm exp}\left(\int_{kT_s}^{(k+1)T_s}S_y(\tau){\rm d}\tau\right)$.

Hereafter, the original contouring problem described in Section~\ref{pddes} is transformed to axial tracking problem for~(\ref{dplant})-(\ref{dexoplant}) by means of finding a compensator fed only by the measurement of the tracking error $e_2(k)$, satisfying
\[
\lim_{k\to\infty}e_2(k)=0\,.
\]

It is important to note that $\bar S(\cdot)$ is time-varying, hence we develop a time-varying internal model principle-based approach to solve the above tracking problem, including the time-varying internal model in Section~\ref{controller}, and the time-varying stabilizer in Section~\ref{stabilizer}.
\section{Design of the TV-IMCC}\label{design}
\begin{figure*}[htbp]
  \centering
  \includegraphics[width=0.8\hsize]{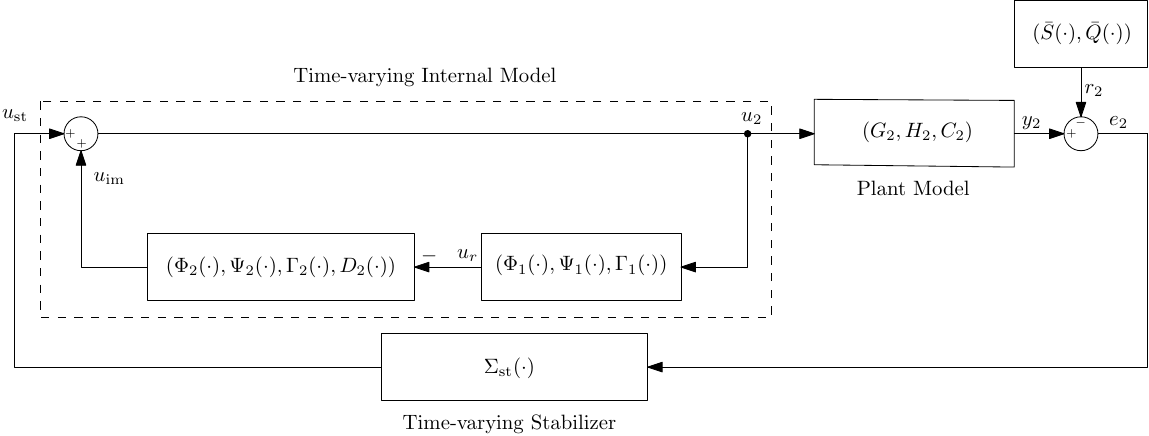}\\
  \caption{The block diagram of the TV-IMCC in a slave axis.}\label{bd}
\end{figure*}
The structure of the TV-IMCC is shown in Fig.~\ref{bd}, where  the controller  is composed of the time-varying internal model and stabilizer.
\subsection{Time-varying internal model design}\label{controller}
The tracking problem for systems~(\ref{dplant})-(\ref{dexoplant}) is solved by finding an error-feedback compensator of the form
\begin{align}
\label{compensator}
\begin{split}
\zeta(k+1)&= M(k)\zeta(k)+N(k) e_2(k)\\[1mm]
u_2(k)&=J(k)\zeta(k)+L(k)e_2(k)\,,
\end{split}
\end{align}
where $\zeta\in\R^{n_\zeta}$ is compensator state, and $M(k), N(k), J(k)$, and $L(k)$ are matrices with appropriate dimensions. The existence of~(\ref{compensator}) is equivalent to the solvability of the following time-varying difference equations (see a continuous-time version in~\cite{zhang2010})
\begin{align}
\begin{split}
\Pi(k+1)\bar S(k)&= G_2\Pi(k)+H_2R(k)\\[1mm]
0&=C_2\Pi(k)+\bar Q(k)\,,
\end{split}
\end{align}
\begin{align}\label{impm}
\begin{split}
\Sigma(k+1)\bar S(k)&= M(k)\Sigma(k)\\[1mm]
R(k)&=J(k)\Sigma(k)\,,
\end{split}
\end{align}
where $\Pi:\R\mapsto\R^{n\times\rho}$, $\Sigma:\R\mapsto\R^{n_\zeta\times\rho}$ and $R:\R\mapsto\R^{1\times\rho}$ are smooth mappings. Eq.~(\ref{impm}) constitutes the counterpart in the LTV setting of the internal model principle. According to the internal model principle~\cite{fran1977}, a desired input $u_{\rm d}(k)=R(k)w(k)$ is an unavailable feedforward term of $u_2(k)$ that keeps the error $e_2(k)$ identically at zero. As a result, one needs to reconstruct $u_{\rm d}$ as an output of the system $(\bar S(k),\ R(k))$, that is
\begin{align}
\label{desire}
\begin{split}
w(k+1)&=\bar S(k)w(k)\\[1mm]
u_{\rm d}(k)&=R(k)w(k)\,.
\end{split}
\end{align}
Without directly solving $R(k)$, we construct $u_{\rm d}$ by resorting to the concept of \emph{system immersion}~\cite{zhang2009}. Namely, every output of~(\ref{desire}) is an output of another system of the form,
\begin{align}
\label{immer}
\begin{split}
\xi(k+1)&=\Phi(k)\xi(k)\\[1mm]
\eta(k)&=\Gamma(k)w(k)\,,
\end{split}
\end{align}
where $\Phi(k)$, and $\Gamma(k)$ are matrices of appropriate dimensions to be designed, and $\xi\in\R^{n_\xi}$ is the associated state, and $\eta(k)\in \R$ is the required  input. Note that there is no general method to directly find such an immersion for LTV systems~\cite{zhang2009,zhang2010}. As an alternative, we construct~(\ref{immer}) by two internal model units, i.e.
\begin{equation}\label{ims1}
\begin{array}{rcl}
\xi_1(k+1)&=&\Phi_1(k)\xi_1(k)+\Psi_1(k)u_2(k)\\[1mm]
u_r(k)&=&\Gamma_1(k)\xi_1(k)\,,
\end{array}
\end{equation}
and
\begin{equation}\label{ims2}
\begin{array}{rcl}
\xi_2(k+1)&=&\Phi_2(k)\xi_2(k)+\Psi_2(k)(-u_r(k))\\[1mm]
u_{\rm im}(k)&=&\Gamma_2(k)\xi_2(k)+D_2(k)(-u_r(k))\,,
\end{array}
\end{equation}
where $\Phi_1(k)$, $\Psi_1(k)$, $\Gamma_1(k)$, $\Phi_2(k)$, $\Psi_2(k)$, $\Gamma_2(k)$, and $D_2(k)$ are matrices to be designed of appropriate dimensions, and $(\xi_1,\,\xi_2)\in(\R^{\rho},\,\R^{\rho-1})$ is the state, and $u_r\in\R$ is the embedded output, and $u_{\rm im}\in\R$ is the internal model input. The motivation of this design is to generate the reference $r_2$ in the place of $u_r$, i.e., $u_r=r_2$. By producing a self-excited input $u_2=u_{\rm im}$, systems~(\ref{ims1})-(\ref{ims2}) can be written as follows,
\begin{align}\label{aux}
\begin{split}
\begin{bmatrix}
\xi_2(k+1)\\ \xi_1(k+1)
\end{bmatrix} =
&\begin{bmatrix}
\Phi_2(k) & -\Psi_2(k)\Gamma_2(k)\\ \Psi_2(k)\Gamma_2(k)&\Phi_2(k)- \Psi_2(k)D_2(k)\Gamma_2(k)
\end{bmatrix}
\begin{bmatrix}
\xi_2(k)\\ \xi_1(k)
\end{bmatrix}\\[2mm]
v=&
\begin{bmatrix}
\Gamma_2(k) & 0
\end{bmatrix}
 \begin{bmatrix}
\xi_2(k)\\ \xi_1(k)
\end{bmatrix}\,,
\end{split}
\end{align}
with an auxiliary output $v\in \R$. Note that the exosystem~(\ref{dexoplant}) can be immersed into the above system~(\ref{aux}) by noting Lemma~\ref{l2}.
\begin{lemma}\label{l2}
Let $(\Phi_1(\cdot),\Psi_1(\cdot),\Gamma_1(\cdot))$ and $(\bar S(\cdot), \bar Q(\cdot))$ be in the observer canonical form, and $(\Phi_2(\cdot),\Psi_2(\cdot),\Gamma_2(\cdot))$ be in the phase-variable (controller) canonical form. And there exists a unique solution to the following Sylvester equation
\begin{equation}\label{se}
\begin{bmatrix}\mathcal{O}_{\Phi_1(\cdot)} & \mathcal{C}_{\Psi_1(\cdot} \end{bmatrix} \begin{bmatrix}1\\q(\cdot)\\p(\cdot)\end{bmatrix}= \mathcal{O}_{\bar S(\cdot)}\begin{bmatrix}1\\q(\cdot)\end{bmatrix}\,,
\end{equation}
such that the exosystem~(\ref{dexoplant}) is immersed into system~(\ref{aux}).
\end{lemma}
\begin{proof}
Note that the pair $(\Phi_1(\cdot)-\bar S(\cdot)\Gamma_1^\top(\cdot)\Gamma_1(\cdot),\ \Gamma_1(\cdot))$ is uniformly observable by canonical form design.
Then the solvability of~(\ref{se}) and the proof are referred to our previous results in Lemma~3.2 and Proposition~3.3 of Ref.\cite{zhang2010}. \hfill $\square$
\end{proof}

The definition of the canonical forms in Lemma~\ref{l2} is referred to~\cite{silverman1966}. In the discrete-time setting of Lemma~\ref{l2}, the variables $q(k)\in\R^{\rho-1}$ and $p(k)\in\R^{\rho-1}$ are the parameter vectors of $\Phi_2(k)$ and $\Gamma_2(k)$ in the controllability canonical form respectively, and $\Psi_2(k)=\begin{bmatrix}0 & 0& \cdots & 1\end{bmatrix}^{\rm T},~D_2(k)=p_{\rho-1}(k)$. The operator $\mathcal{O}_{\bar S(k)}\in\R^{(2\rho-1)\times\rho}$ is defined as
\begin{equation}\label{os}
\mathcal{O}_{\bar S(k)}=\begin{bmatrix} \mathcal{G}_{\rho-1} & \cdots & \begin{bmatrix} 0_{\rho-3} \\ 1 \\ \mathcal{G}_1 \end{bmatrix} & \begin{bmatrix} 0_{\rho-2} \\ 1 \\ \mathcal{G}_0 \end{bmatrix} \end{bmatrix} \,,
\end{equation}
with $\alpha(k)\in\R^{\rho}$ collecting the coefficients of the exosystem dynamics matrix $\bar S(k)$ in its observer canonical form $\bar S_o(k)$. And the operator $\mathcal{G}$ is defined as
\begin{equation*}
\mathcal{G}_{i+1}= \mathcal{q}^{-1}\begin{bmatrix} \mathcal{G}_i \\ 0 \end{bmatrix},\quad \mathcal{G}_0=\alpha(k+\rho-1),\quad i=0,\cdots,\rho-2,
\end{equation*}
where $\mathcal{q}^{-1}$ is the unit delay operator. Another operator $\mathcal{O}_{\Phi_1(k)}\in\R^{(2\rho-1)\times\rho}$ is defined similarly with $\mathcal{O}_{\bar S(k)}$. The operator $\mathcal{C}_{\Psi_1(k)}\in\R^{(2\rho-1)\times\rho}$ is defined as
\begin{equation}\label{cp}
\mathcal{C}_{\Psi_1(k)}=\begin{bmatrix} \mathcal{H}_{\rho-1} & \cdots & \begin{bmatrix} 0_{\rho-2} \\ \mathcal{H}_1 \end{bmatrix} & \begin{bmatrix} 0_{\rho-1} \\ \mathcal{H}_0 \end{bmatrix} \end{bmatrix} \,,
\end{equation}
with
\begin{equation*}
\mathcal{H}_{i+1}= \mathcal{q}^{-1}\begin{bmatrix} \mathcal{H}_i \\ 0 \end{bmatrix},\quad \mathcal{H}_0=\Psi_1(k+\rho-1),\quad i=0,\cdots,\rho-2.
\end{equation*}
Note that an exponential stable LTI filter can be utilized to augment the exosystem~\cite{zhang2010}, such that dim$(\xi_1)$ and dim$(x_2)$ are the same ($n=\rho$), and hence both sides of~(\ref{se}) have the same dimension.

Given Lemma~\ref{l2}, to immerse the desired system~(\ref{desire}) into system~(\ref{immer}), the tuple $(\Phi_{1}(\cdot),~\Psi_{1}(\cdot),~\Gamma_{1}(\cdot))$ should have the same I/O (input/output) form as that of the plant model according to Lemma 4.1 in~\cite{zhang2010}. Let the plant model $(G_{2},~H_{2},~C_{2})$ in the observer canonical form, the easiest design of $(\Phi_{1}(k),~\Psi_{1}(k),~\Gamma_{1}(k))$ in~(\ref{ims1}) can be chosen as
\begin{equation}\label{ims1d}
(\Phi_{1}(k),~\Psi_{1}(k),~\Gamma_{1}(k))=(G_{2},~H_{2},~C_{2})\,.
\end{equation}

\begin{remark}
The tuple $(\Phi_{1}(k),~\Psi_{1}(k),~\Gamma_{1}(k))$ is replaced by $(G_{2},~H_{2},~C_{2})$ hereafter.
\end{remark}
%Suppose that the plant model is stable. With the above design~(\ref{ims1})-(\ref{ims1d}), it is left to stabilize the following augmented system
%\begin{equation}\label{as}
%\begin{scriptsize}
%\begin{array}{rcl}
%\begin{bmatrix} \xi_2(k+1)\\x_2(k+1)\end{bmatrix}&=&\begin{bmatrix} \Phi_2(k)& -\Psi_2(k)C_2(k) \\ H_2(k)\Gamma_2(k) & G_2(k)-H_2(k)D_2(k)C_2(k)\end{bmatrix}\begin{bmatrix} \xi_2(k)\\x_2(k)\end{bmatrix}\\[3mm]
%&&+\begin{bmatrix} 0\\H_2(k)\end{bmatrix}u_{st}(k)\,.
%\end{array}
%\end{scriptsize}
%\end{equation}
\subsection{Time-varying stabilizer design}\label{stabilizer}
%\begin{figure*}[!ht]
%\subfloat{
%\includegraphics[width=0.45\linewidth]{la.pdf}}
%\hfill
%\hspace{.4in}
%\subfloat{
%\includegraphics[width=0.45\linewidth]{ua.pdf}}
%\hfill
%\caption{Frequency response of the experimental setup. Left: $X_1$; Right: $X_2$.}
%\label{fr}
%\end{figure*}
With the above time-varying internal model~(\ref{ims1})-(\ref{ims1d}), it remains to design a time-varying stabilizer for the augmented system~(\ref{augment}) of the plant model and the time-varying internal model with $r_2=f(k)=0$ as the reference plays no role in stabilization. \begin{align}\label{augment}
\begin{split}
\begin{bmatrix}\xi_2(k+1)\\x_2(k+1) \end{bmatrix} =&\begin{bmatrix}\Phi_2(k) & -\Psi_2(k)C_2\\H_2\Gamma_2(k)&G_2- H_2D_2(k)C_2 \end{bmatrix} \begin{bmatrix}\xi_2(k)\\x_2(k) \end{bmatrix} \\[3mm]
&+ \begin{bmatrix}0\\H_2 \end{bmatrix}u_{\rm st}(k)\,.
\end{split}
\end{align}
\begin{lemma}\label{stabilization}
To obtain the stabilizer input $u_{\rm st}\in \R$, we first provide the following lemma on stabilization of the system~(\ref{augment}).

A sufficient condition of stabilizing the  augmented system~(\ref{augment}) is to stabilize the following one
\begin{equation}\label{augsys}
\begin{array}{rcl}
x_{\rm o}(k+1)&=&A(k)x_{\rm o}(k)+B(k)u_{\rm st}(k)\,\\[1mm]
y_2(k)&=&\begin{bmatrix}1&0&\cdots&0\end{bmatrix}x_{\rm o}(k)\,,
\end{array}
\end{equation}
where
\begin{equation*}
A(k)=\begin{bmatrix} -\mathcal{Q}\alpha(k)_{\rho\times1} & I_{(n-1)\times(n-1)}\\0_{(n-\rho)\times1}&0_{1\times(n-1)} \end{bmatrix}\,,\,B(k)=H_2\,.
\end{equation*}
And the operator $\mathcal{Q}$ is defined as
\begin{equation*}
\mathcal{Q}=\rm{diag}(\mathcal{q}^{\rho-1},\mathcal{q}^{\rho-2},\cdots,1).
\end{equation*}
\end{lemma}
\begin{proof}
The proof is referred to the previous work in Lemma 3.1 of Ref.\cite{zhang2014}. \hfill $\square$
\end{proof}
Note that system~(\ref{augsys}) can be split as
\begin{equation}\label{splitsys}
\begin{array}{rcl}
x_{\rm o1}(k+1)&=&A_{11}(k)x_{\rm o1}(k)+A_{12}(k)x_{\rm o1}(k)+B_1(k)u_{\rm st}(k)\,\\[1mm]
x_{\rm b}(k+1)&=&A_{21}(k)x_{\rm b}(k)+A_{22}(k)x_{\rm b}(k)+B_2(k)u_{\rm st}(k)\,\\[1mm]
y_2(k)&=&x_{\rm o1}(k)\,,
\end{array}
\end{equation}
where $x_{\rm o1}\in\R$ is the first state of $x_{\rm o}$, and $x_{\rm b}\in\R^{n-1}$ collects the rest $(n-1)$ states of $x_{\rm o}$. %, and $A_{11}(k),A_{21}(k),A_{12},A_{22},B_1(k),B_2(k)$ are the corresponding splits of $A(k)$ and $B(k)$.
We introduce a reduced order observer of the state $x_{\rm b}$ as follows:
\begin{equation}\label{ob}
\begin{array}{rcl}
\hat z(k+1)&=&(A_{22}-FA_{12})\hat z(k)+(B_2(k)-FB_1(k))u_{\rm st}(k)\\[1mm]
&&+((A_{22}-FA_{12})F+A_{21}(k)-FA_{11}(k))y_2(k)\,,\\[1mm]
\hat x_{\rm b}(k)&=&\hat z(k)+Fy_2(k)\,,
\end{array}
\end{equation}
where $F$ is the output injection gain of the observer. Here $F$ can be chosen as a constant vector for that the pair $(A_{21},A_{22})$ is time-invariant according to the form of $A(k)$ in Lemma~\ref{stabilization}. With the estimation $\hat x_{\rm b}(k)$, we can stabilize system~(\ref{splitsys}) via state feedback, with the stabilizer input
\begin{equation*}
u_{\rm st}=\begin{bmatrix}K_1(k) & K_2(k)\end{bmatrix}\begin{bmatrix}x_{\rm o1}(k)\\ \hat x_{\rm b}(k)\end{bmatrix},
\end{equation*}
where $K(k)=\begin{bmatrix}K_1(k) & K_2(k)\end{bmatrix}\in \begin{bmatrix}\R & \R^{1\times(n-1)}\end{bmatrix}$ is the feedback gain. Hence, the remaining task is to make the closed-loop system
\begin{equation}\label{stabsys}
x_{\rm o}(k+1)=(A(k)+B(k)K(k))x_{\rm o}(k)
\end{equation}
asymptotically stable. To begin with, we show that the time-varying matrix $A(k)$ can be split by polytope in the following Proposition~\ref{a3}.
\begin{proposition}
\label{a3}
The time-varying terms of $A(k)$ are parameter $\sigma (k)$ dependent and $\sigma(k)\in \R$ belongs to a polytope, that is
\begin{equation}\label{poly}
A(k)=A(\sigma(k))=\sum\limits_{i=1}^N\sigma_i(k)A_i\,,
\end{equation}
where
\begin{equation*}
\sigma_i(k)\ge0,~~\sum\limits_{i=1}^N\sigma_i(k)=1\,,
\end{equation*}
and $A_i$'s are constant matrices.
\end{proposition}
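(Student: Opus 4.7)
The plan is to exploit the fact that the only time variation in $A(k)$ comes through the single column vector $\alpha(k)$, which is a linear (actually affine) parameter of the matrix. First I would read off from the block structure
\[
A(k)=\begin{bmatrix} -\alpha(k) & I_{(n-1)\times(n-1)}\\0 & 0 \end{bmatrix}
\]
that the identity and zero blocks are constant, so the entire time dependence is carried by the $\rho$ entries of $\alpha(k)$. Since $\alpha(k)$ is obtained from $\bar S(k)$ in observer canonical form, each of its entries is a rational function of $l'(x_1)/l(x_1)$ and $\dot x_1(k)$, which are bounded along any physically admissible master-axis trajectory (bounded velocity, and $l(x_1)$ bounded away from zero on the reference profile). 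Hence $\alpha(k)$ takes values in some compact set $\Omega\subset\R^{\rho}$.

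Next I would pick $N$ vertices $\alpha_1,\ldots,\alpha_N$ of a polytope enclosing $\Omega$ (the simplest and most conservative choice is the axis-aligned box obtained from componentwise upper and lower bounds, with $N=2^{\rho}$ vertices). For every $k$, by the definition of a convex polytope, there exist nonnegative coefficients $\sigma_1(k),\ldots,\sigma_N(k)$ with $\sum_{i=1}^N\sigma_i(k)=1$ such that
\[
\alpha(k)=\sum_{i=1}^N\sigma_i(k)\alpha_i.
\]
This is the crux: once $\alpha(k)$ is captured as a convex combination, the polytopic decomposition of $A(k)$ follows from the affine dependence of $A(\cdot)$ on $\alpha$.

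Finally, I would define the constant matrices
\[
A_i\triangleq\begin{bmatrix} -\alpha_i & I_{(n-1)\times(n-1)}\\0 & 0 \end{bmatrix},\qquad i=1,\ldots,N,
\]
and verify by direct substitution that $\sum_{i=1}^N\sigma_i(k)A_i=A(k)$: the identity and zero blocks reproduce themselves since $\sum\sigma_i(k)=1$, while the first column reproduces $-\alpha(k)$ by the convex representation above. Letting $\sigma(k)=(\sigma_1(k),\ldots,\sigma_N(k))$ then gives equation~\eqn{poly}, completing the argument.

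The main obstacle here is not algebraic but lies in step one, namely establishing the compactness of the range of $\alpha(k)$. One must ensure that $l(x_1)$ stays bounded away from $0$ along the desired contour (so that $l'(x_1)/l(x_1)$ does not blow up) and that the master-axis velocity $\dot x_1$ stays bounded; both are reasonable engineering hypotheses on the reference, but they should be stated explicitly, for instance as a standing assumption on the admissible class of master-axis trajectories, before the polytopic representation can be claimed globally in $k$. Once that is in place, the construction of $\{A_i,\sigma_i(k)\}$ is routine.
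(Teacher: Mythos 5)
Your proof is correct, but it takes a genuinely different route from the paper's. The paper parametrizes the time variation by the scalar $x_1(k)$, places $N$ gridding points $x_1^1<\dots<x_1^N$ over its (bounded) range, sets $A_i=A(x_1^i)$, and defines the $\sigma_i(k)$ as the Lagrange interpolation basis polynomials $\prod_{j\neq i}\frac{x_1(k)-x_1^j}{x_1^i-x_1^j}$. You instead work in the parameter space of the vector $\alpha(k)\in\R^{\rho}$, enclose its (compact) range in a polytope, and use the affine dependence of $A$ on $\alpha$ to push a convex combination of vertices through to a convex combination of matrices. Your route buys two things the paper's does not. First, convex-combination coefficients are nonnegative by construction, whereas Lagrange basis polynomials generically change sign between grid points, so the paper's $\sigma_i(k)\ge 0$ claim is not actually delivered by its own construction --- and nonnegativity is essential for the polytopic LMI machinery of Lemma~3.6 (Daafouz--Bernussou). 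Second, your identity $\sum_i\sigma_i(k)A_i=A(k)$ is exact because $A$ is affine in $\alpha$; the paper's identity holds exactly only if $A$ depends polynomially (degree at most $N-1$) on $x_1$, which the rational terms $l'(x_1)/l(x_1)$ and $\dot x_1$ in $\bar S$ do not guarantee, so the paper's decomposition is in general only an approximation. The price you pay is conservatism: a bounding box has $N=2^{\rho}$ vertices and over-covers the actual range of $\alpha(k)$, which is essentially a low-dimensional curve since all entries are driven by the single master-axis trajectory; this can make the LMIs harder to satisfy but does not affect the correctness of the proposition. Your explicit flagging of the compactness hypotheses ($l$ bounded away from zero, $\dot x_1$ bounded) is a genuine improvement --- the paper relies on the same facts only implicitly through the remark that the master-axis tracking error is bounded over a finite run.
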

\begin{proof}
From~equation (\ref{SQtime}), it is clear that the time-varying terms of $A(k)$ are parameter $y_1(k)$ dependent. Also, $y_1(k)$ is strictly monotonic according to Assumption~\ref{a1} and~\ref{a2}, thus it is ensured that $y_1(k)> 0$ by adding a constant term if necessary. Therefore, the parameter $\sigma (k)$ in equation~\eqn{poly} can be designed as a function of $y_1(k)$.

Note that $y_1(k)$ is bounded due to the independent tracking of the master axis in the TV-IMCC framework. Thus, one can discretize $y_1$ into $N$ gridding points as
\begin{equation*}
\begin{split}
&y_1^i\in y_1^1,y_1^2,\dots,y_1^N\\
&y_1^1=\inf y_{1}\\
&y_1^N=\sup y_{1}\,,
\end{split}
\end{equation*}
and
\begin{equation*}
A_i=A(y_1^i)\,.
\end{equation*}
Therefore for each $y_1(k)$, parameter $\sigma_i(k)$ is defined by interpolation as follows
\begin{equation}\label{sigma}
\sigma_i(k)=\prod_{j \neq i,~j\in 1,2,\dots,N}\dfrac{y_1(k)-y_1^j}{y_1^i-y_1^j}\,,
\end{equation}
which satisfies equation~\eqn{poly}.~\hfill $\square$
\end{proof}

Meanwhile, note that $B(k)$ is time-invariant in our design. Therefore, the feedback gain $K(k)$ can be obtained by solving the following linear matrix inequalities (LMIs) in Lemma~\ref{polytype}.
\begin{lemma}\label{polytype}
The stabilizer gain $K(k)$ of system~(\ref{stabsys}) can be obtained if the following LMIs are solvable for all $i=1,\,2,\cdots,N$ and $j=1,\,2,\cdots,N,$
\begin{equation}\label{lmi}
\begin{bmatrix} G_i+G_i^\top-Q_i & (A_iG_i+BR_i)^\top\\A_iG_i+BR_i & Q_j\end{bmatrix}>0\,,
\end{equation}
where $Q_i,~Q_j\in \R^{n\times n}$ (symmetric positive-definite matrices), and $G_i\in \R^{n\times n},~R_i\in \R^{1\times n}$ are the solutions of the above LMIs. Then the stabilizer gain $K(k)$ is obtained by
\begin{equation}\label{k}
K(k)=\sum_{i=1}^N\sigma_i(k)R_iG_i^{-1}\,.
\end{equation}
\end{lemma}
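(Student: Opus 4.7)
My plan is to prove Lemma~\ref{polytype} by following the slack-variable, parameter-dependent Lyapunov approach of de Oliveira and Daafouz--Bernussou for discrete-time polytopic systems, instantiated for the controller synthesis setting where the scheduling parameter $\sigma(k)$ defined in~\eqn{sigma} is measurable through $x_1(k)$. The Lyapunov function candidate I would use for the closed-loop dynamics~\eqn{stabsys} is the parameter-dependent quadratic form $V(k)=x_{\rm o}(k)'P(\sigma(k))x_{\rm o}(k)$ with $P(\sigma(k))=\sum_{i=1}^N\sigma_i(k)Q_i^{-1}$, which is well-defined and positive definite since each $Q_i>0$ and the weights lie in the simplex by Proposition~\ref{a3}.

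The first major step is to convert each vertex LMI~\eqn{lmi} into a pointwise Lyapunov-type inequality. From positivity of the $(1,1)$ block and $Q_i>0$, one obtains that $G_i$ is nonsingular and that the matrix inequality $G_i'Q_i^{-1}G_i \ge G_i+G_i'-Q_i$ holds (this is the standard identity $(G_i-Q_i)'Q_i^{-1}(G_i-Q_i)\ge 0$). Applying the Schur complement to~\eqn{lmi} gives
\begin{equation*}
(A_iG_i+BR_i)'Q_j^{-1}(A_iG_i+BR_i)<G_i+G_i'-Q_i\le G_i'Q_i^{-1}G_i\,.
\end{equation*}
Pre- and post-multiplying by $G_i^{-T}$ and $G_i^{-1}$, and defining the vertex gains $K_i\triangleq R_iG_i^{-1}$, one recovers the closed-loop vertex inequalities $(A_i+BK_i)'Q_j^{-1}(A_i+BK_i)<Q_i^{-1}$ for every pair $(i,j)\in\{1,\dots,N\}^2$. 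Note that this is precisely where the two-index structure of the LMIs pays off, as it couples ``current'' vertex $i$ with an arbitrary ``next'' vertex $j$, exactly what a parameter-dependent Lyapunov function with a time-varying $\sigma$ requires.

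The second major step is to combine the vertex conditions into the decrease of $V$. Since $K(k)=\sum_i\sigma_i(k)R_iG_i^{-1}=\sum_i\sigma_i(k)K_i$ by~\eqn{k}, and $A(k)=\sum_i\sigma_i(k)A_i$ by Proposition~\ref{a3}, the closed-loop matrix factors as $A(\sigma(k))+BK(\sigma(k))=\sum_i\sigma_i(k)(A_i+BK_i)$. Weighting the vertex inequalities by the non-negative products $\sigma_i(k)\sigma_j(k+1)$ (which sum to $1$ thanks to $\sum_i\sigma_i=\sum_j\sigma_j=1$) and summing over $i,j$ produces the Lyapunov decrease condition
\begin{equation*}
(A(\sigma(k))+BK(\sigma(k)))'P(\sigma(k+1))(A(\sigma(k))+BK(\sigma(k)))<P(\sigma(k))\,,
\end{equation*}
from which $V(k+1)-V(k)<0$ along every nonzero trajectory, i.e.\ asymptotic stability of~\eqn{stabsys}. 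Combined with Lemma~\ref{stabilization}, this stabilizes the augmented system~\eqn{augment} and completes the proof.

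The step I expect to be most delicate is the convex-combination passage in the last paragraph, because a direct weighted sum of the vertex inequalities controls only the ``diagonal'' quadratic terms $(A_i+BK_i)'Q_j^{-1}(A_i+BK_i)$, while the left-hand side $(A(\sigma)+BK(\sigma))'P(\sigma^+)(A(\sigma)+BK(\sigma))$ also contains ``cross'' terms $(A_i+BK_i)'Q_j^{-1}(A_l+BK_l)$ for $i\ne l$. The cleanest way around this is to keep the argument in the slack-variable formulation: weight the original LMIs~\eqn{lmi} (which are linear in $\sigma$) by $\sigma_i(k)\sigma_j(k+1)$ and sum first, and only then apply the Schur complement and the $G_i'Q_i^{-1}G_i\ge G_i+G_i'-Q_i$ bound to the resulting parameter-dependent inequality; this avoids opening the quadratic form and handles the cross terms implicitly. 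A further simplification specific to our construction is that, by the interpolation formula~\eqn{sigma}, at any sampling instant $k$ at most two adjacent $\sigma_i(k)$ can be nonzero, so the effective index set reduces dramatically and the cross terms are limited to a single neighboring pair, making the convexity bookkeeping essentially trivial in practice.
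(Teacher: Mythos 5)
You are following the same route as the paper: its own proof consists of writing the time-varying Lyapunov condition~\eqn{lyatv}, Schur-complementing it into~\eqn{lmi2}, and then citing Theorems~3 and~4 of Daafouz--Bernussou for the polytopic synthesis step; your proposal reconstructs that cited argument in full (slack variable $G_i$, the bound $G_i'Q_i^{-1}G_i\ge G_i+G_i'-Q_i$ from $(G_i-Q_i)'Q_i^{-1}(G_i-Q_i)\ge0$, vertex gains $K_i=R_iG_i^{-1}$, and the parameter-dependent Lyapunov matrix $P(\sigma)=\sum_i\sigma_iQ_i^{-1}$). Your first step, deriving $(A_i+BK_i)'Q_j^{-1}(A_i+BK_i)<Q_i^{-1}$ for every pair $(i,j)$, is correct.

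The gap is in how you close the convex-combination step. The workaround you propose --- weight the LMIs~\eqn{lmi} by $\sigma_i(k)\sigma_j(k+1)$, sum, and only then apply the Schur complement --- does not produce the inequality you need: the summed off-diagonal block is $\sum_i\sigma_i(A_iG_i+BR_i)$, which is not $\left(A(\sigma)+BK(\sigma)\right)\sum_i\sigma_iG_i$ because the mixed products $A_iG_l$ and $BR_iG_i^{-1}G_l$ with $i\ne l$ never appear, and the summed $(2,2)$ block is $\sum_j\sigma_j(k+1)Q_j$, whose inverse is not your Lyapunov matrix $\sum_j\sigma_j(k+1)Q_j^{-1}$. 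So ``summing first'' silently changes both the gain and the Lyapunov function and does not close the argument. The correct (and standard) way to absorb the cross terms is the matrix-convexity identity: for $M>0$, $\sigma_i\ge0$, $\sum_i\sigma_i=1$,
\begin{equation*}
\left(\sum_i\sigma_iX_i\right)'M\left(\sum_l\sigma_lX_l\right)=\sum_i\sigma_iX_i'MX_i-\sum_{i<l}\sigma_i\sigma_l(X_i-X_l)'M(X_i-X_l)\le\sum_i\sigma_iX_i'MX_i\,,
\end{equation*}
applied with $X_i=A_i+BK_i$ and $M=Q_j^{-1}$ for each fixed $j$, then summed over $j$ with weights $\sigma_j(k+1)$; this absorbs exactly the cross terms you worried about and yields $A_c(k)'P(k+1)A_c(k)<P(k)$, after which Lemma~\ref{stabilization} finishes the job. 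Finally, your parenthetical claim that the interpolation formula~\eqn{sigma} forces at most two adjacent $\sigma_i(k)$ to be nonzero is false: formula~\eqn{sigma} is the Lagrange basis, all $N$ members of which are generically nonzero (and not even sign-definite) away from the grid points, so it cannot be used to trivialize the bookkeeping --- and note that your whole convexity argument, like the lemma itself, leans on the nonnegativity of the $\sigma_i$ asserted in Proposition~\ref{a3}.
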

\begin{proof}
See Appendix~\ref{apx2}.
\end{proof}

With the internal model controller in Section~\ref{controller} and the stabilizer in Section~\ref{stabilizer}, we are in position to state the TV-IMCC controller.
\subsection{TV-IMCC controller design}\label{tvcontroller}
The TV-IMCC controller is provided as the following result.
\begin{theorem}
The tracking error $e_2$ of systems~(\ref{dplant})-(\ref{dexoplant}) converges to zero with the following TV-IMCC controller
\begin{equation}\label{stab}
%\begin{scriptsize}
\begin{array}{rcl}
\hspace{-2mm}\xi_1(k+1)&\hspace{-3mm}=&\hspace{-3mm}G_2\xi_1(k)+H_2u_2(k)\\[1mm]
\hspace{-2mm}u_r(k)&\hspace{-3mm}=&\hspace{-3mm}C_2\xi_1(k)\\[3mm]
\hspace{-2mm}\xi_2(k+1)&\hspace{-3mm}=&\hspace{-3mm}\Phi_2(k)\xi_2(k)+\Psi_2(k)(-u_r(k))\\[1mm]
\hspace{-2mm}u_{\rm im}(k)&\hspace{-3mm}=&\hspace{-3mm}\Gamma_2(k)\xi_2(k)+D_2(k)(-u_r(k))\\[3mm]
\hspace{-2mm}\hat z(k+1)&\hspace{-3mm}=&\hspace{-3mm}(A_{22}(k)-FA_{12}(k))\hat z(k)+(B_2-FB_1)u_{\rm st}(k)+\\[1mm]
\hspace{-2mm}&\hspace{-3mm}&\hspace{-3mm}((A_{22}(k)-FA_{12}(k))F+A_{21}(k)-FA_{11}(k))e_2(k)\\[1mm]
\hspace{-2mm}\hat x_{\rm b}(k)&\hspace{-3mm}=&\hspace{-3mm}\hat z(k)+Fe_2(k)\\[1mm]
\hspace{-2mm}u_{\rm st}(k)&\hspace{-3mm}=&\hspace{-3mm}K_1(k)e_2(k)+K_2(k)\hat x_{\rm b}(k)\\[3mm]
\hspace{-2mm}u_2(k)&\hspace{-3mm}=&\hspace{-3mm}u_{\rm im}(k)+u_{\rm st}(k)\,,
\end{array}
%\end{scriptsize}
\end{equation}
where $(\Phi_{2}(k),~\Psi_{2}(k),~\Gamma_{2}(k),~D_{2}(k))$ is obtained by solving the Sylvester equation~(\ref{se}); and $K(k)$ is obtained in~(\ref{k}) by solving the LMIs~(\ref{lmi}).
\end{theorem}
\begin{proof}
The proof immediately follows by noting Lemma~\ref{l2}, \ref{stabilization}, \ref{polytype}, Proposition~\ref{a3}, the I/O mapping~(\ref{ims1d}), and the reduced order observer~(\ref{ob}).\hfill $\square$
\end{proof}
\section{Simulation and experimental study}\label{exp}
In this section, numerous simulations and experiments are conducted to validate the proposed TV-IMCC methodology. We start from the biaxial contours tracked by the XY servo stage shown in Fig.~\ref{stage}. The discrete model of the two axes (master: $X_1$; slave: $X_2$) is identified with sampling rate $1$kHz. The parameters of the master axis model $(G_1,H_1,C_1)$ and the slave axis model $(G_2,H_2,C_2)$ are identified as follows
\begin{equation}
\begin{split}
&G_1=[1.9734\quad1;\quad-0.9735\quad 0],\\
&H_1=[2.5259{\rm e}^{-4}\quad2.5034{\rm e}^{-4}]^\top,\quad C_1=[1\quad0];\\
&G_2=[1.9581\quad1;\quad-0.9583\quad 0],\\
&H_2=[6.8214{\rm e}^{-4}\quad6.7253{\rm e}^{-4}]^\top,\quad C_2=[1\quad0].
\end{split}
\end{equation}

We make comparisons among the tracking results of axial PID, CCC, TCF and the proposed TV-IMCC. The axial PID gains are tuned as $K_{\rm p_1}=34.96$, $K_{\rm i_1}=173.3$, $K_{\rm d_1}=0.40;~K_{\rm p_2}=11.34$, $K_{\rm i_2}=54.11$, $K_{\rm d_2}=0.18$. The contour error feedback gains of the CCC are chosen as $K_{\rm x_1}=K_{\rm x_2}=1$; the desired error dynamics matrix of the TCF is chosen as $A_{\rm Td}= [0~0~1~0;~0~0~0~1;~-1{\rm e}^4~0~-1{\rm e}^2~0;~0~-1{\rm e}^4~0~-1{\rm e}^2]$; the TV-IMCC controller is a third-order one ($\xi_1\in\R^2$ and $\xi_2\in\R$), whose time-varying parameters are calculated in real-time by equation~(\ref{se}) and~(\ref{stab}), with stabilizer gains $K=[-1.50{\rm e}^3~-9.47{\rm e}^2]$ solved by the LMI~(\ref{lmi}) and $F=1{\rm e}^{-4}$ in~(\ref{ob}). The contouring errors of different methods are unified within the GCCC criterion~\cite{cheng2009} for a fair comparison. The results are shown in the following subsections. %The experiments of TCF methods are omitted because of the complicated computation in solving coordinate transformation matrices and low-efficient parameter tuning. %Meanwhile, it is noticed that in the TV-IMCC structure, if the two-axis simulations and experiments are successfully conducted, the results can be easily generalized to multi-axis conditions. Thus, simulations and experiments of two-dimensional contours are mainly concerned.
\subsection{Simulation study}
\begin{figure}[!t]
  \centering
  \includegraphics[width=0.5\hsize]{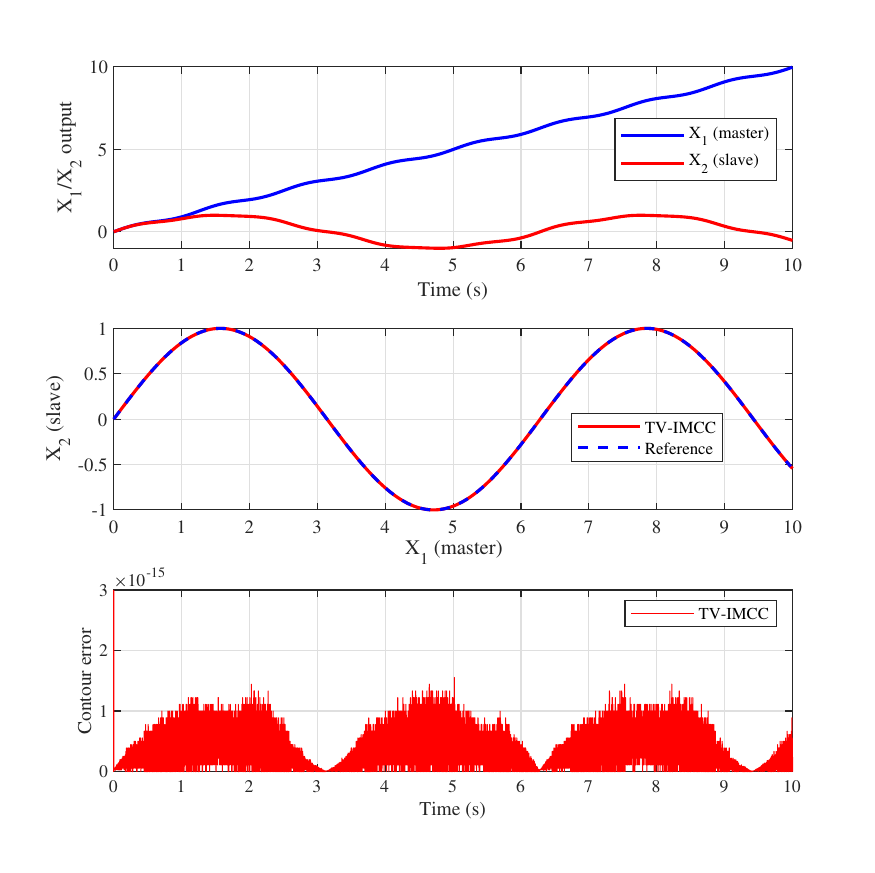}\\
  \caption{Simulation results of a sinusoidal contour tracking by the proposed TV-IMCC.}\label{simuidealsine}
\end{figure}
\begin{figure}[!t]
  \centering
  \includegraphics[width=0.5\hsize]{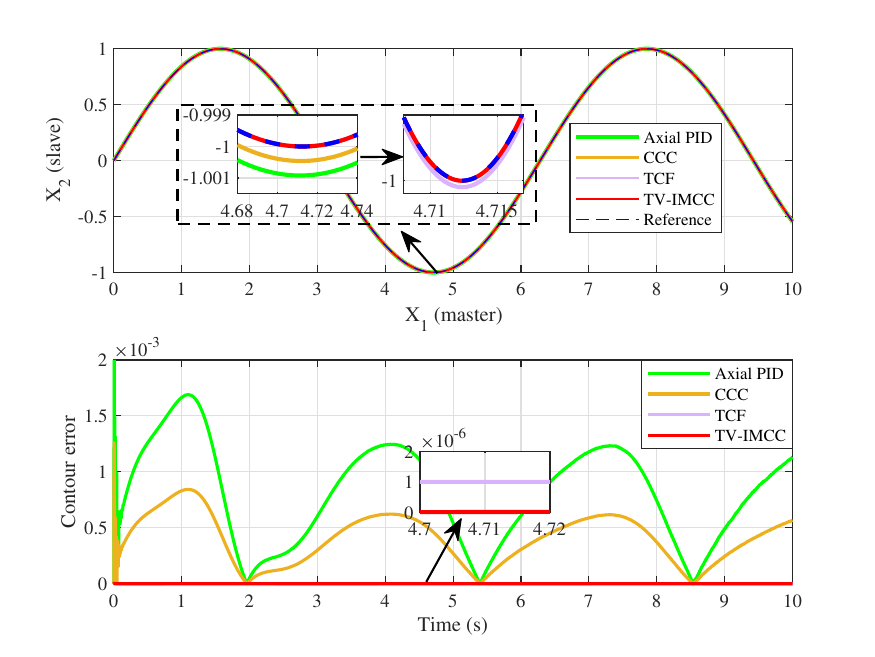}\\
  \caption{Simulation results of tracking a sinusoidal contour by axial PID, CCC, TCF, and TV-IMCC.}\label{simusine}
\end{figure}
\begin{figure}[!t]
  \centering
  \includegraphics[width=0.5\hsize]{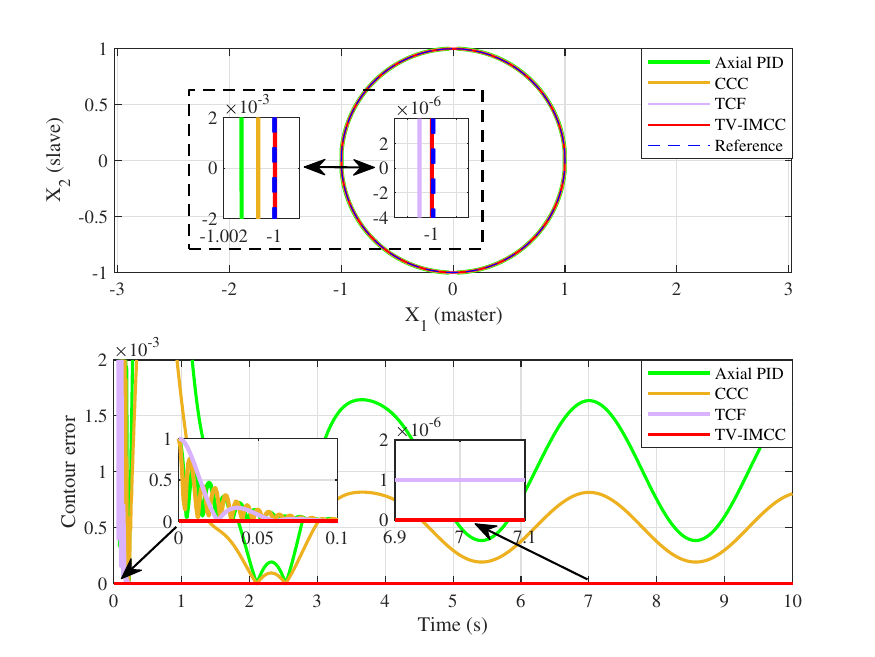}\\
  \caption{Simulation results of tracking a circular contour by axial PID, CCC, TCF, and TV-IMCC.}\label{simucircle}
\end{figure}
The simulations are conducted for the motion stage dynamics with zero initial condition in MATLAB$^\text{TM}$ Simulink. We begin with a sinusoidal contour satisfying Assumption~\ref{a1}, and the reference is $r_2=\sin y_1$, and the position of the master axis $X_1$ is in the form $y_1=t+0.1\sin5t$. To better demonstrate the asymptotical tracking performance of the TV-IMCC, here we introduce a signal $d_{\rm a}=0.1\sin5t$ to mimic an obvious tracking error of master axis. With the structure of $y_1$, the exosystem is obtained via Proposition~\ref{prop:exo} in advance of the simulation. The tracking results of the slave axis $X_2$ and the contour error are shown in Fig.~\ref{simuidealsine}, where the order of the contour error is of $10^{-16}$ (around the precision of floating numbers in MATLAB). The results indicate that the reference is  asymptotically tracked by the TV-IMCC in the presence of a large yet bounded tracking error of the master axis.%, as long as the tracking error of the master axis is bounded. %Indeed, the actual time-varying generating dynamics is reconstructed by the position of the master axis in real-time by equation~(\ref{exomodel}) with a sampling-period-order delay. So the tracking performance in actual situations may deteriorate compared with the ideal situation.

%Taking the above factor into account,
To compare the contouring capability of the proposed TV-IMCC with the axial PID,~CCC and TCF, the following comparison simulations are conducted for three kinds of contours: a sinusoidal contour with $r_1= t$ and $r_2=\sin t$, a circular contour with $r_1=\cos t$ and $r_2=\sin t$, and a heart contour with $r_1=\cos t$ and $r_2=\sin t+(\cos t)^{2/3}$. The exosystems are obtained via Proposition~\ref{prop:exo} in realtime simulations. The results are shown in Fig.~\ref{simusine}, \ref{simucircle} and \ref{simuheart}, respectively. It is seen from the results that the TV-IMCC significantly outperforms other methods with respect to the contour error. Furthermore, given the tracking results of the heart contour shown in Fig.~\ref{simuheart}, we find that the TV-IMCC is able to optimize the tracking performance around the portions with large curvature of the contour, thanks to the fast transient behavior enabled by the internal model principle-based controller.
\begin{figure}[!t]
  \centering
  \includegraphics[width=0.5\hsize]{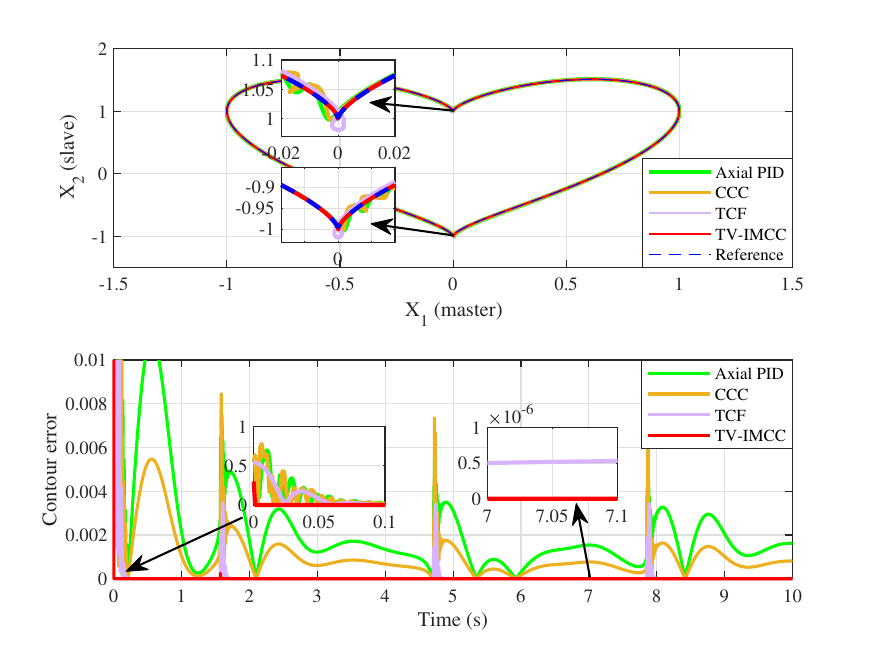}\\
  \caption{Simulation results of tracking a heart contour by axial PID, CCC, TCF, and TV-IMCC.}\label{simuheart}
\end{figure}

To summarize the results of the above comparison simulations, Table~\ref{tabs} containing the root mean square (RMS) and maximum of the simulation contour errors is provided.
\begin{table}[!t]
\tabcolsep=1.5pt
\renewcommand\arraystretch{1.2}
  \centering
     \makeatletter\def\@captype{table}\makeatother\caption{Simulation contouring performance indices: Error RMS and maximum.}
  \begin{tabular}{c|cc|cc|cc|cc}
    \toprule
    Controller & \multicolumn{2}{c|}{PID} & \multicolumn{2}{c|}{CCC} & \multicolumn{2}{c|}{TCF} & \multicolumn{2}{c}{TV-IMCC} \\
    Error index & $\varepsilon_{\rm rms}$ & $\varepsilon_{\rm max}$ & $\varepsilon_{\rm rms}$ & $\varepsilon_{\rm max}$ & $\varepsilon_{\rm rms}$ & $\varepsilon_{\rm max}$ & $\varepsilon_{\rm rms}$ & $\varepsilon_{\rm max}$ \\
    \midrule
    \tabincell{c}{Sinusoidal} & $\scriptstyle 8.7{\text e}^{-4}$ & $\scriptstyle 1.7{\text e}^{-3}$ & $\scriptstyle 4.3{\text e}^{-4}$ & $\scriptstyle 8.4{\text e}^{-4}$ & $\scriptstyle 6.5{\text e}^{-7}$ & $\scriptstyle 1.0{\text e}^{-6}$ & $\scriptstyle 2.6{\text e}^{-11}$ & $\scriptstyle 5.1{\text e}^{-9}$\\
    \tabincell{c}{Circular} & $\scriptstyle 1.1{\text e}^{-3}$ & $\scriptstyle 1.6{\text e}^{-3}$ & $\scriptstyle 5.7{\text e}^{-4}$ & $\scriptstyle 8.2{\text e}^{-4}$ & $\scriptstyle 6.6{\text e}^{-7}$ & $\scriptstyle 1.0{\text e}^{-6}$ & $\scriptstyle 2.6{\text e}^{-11}$ & $\scriptstyle 5.0{\text e}^{-9}$\\
    \tabincell{c}{Heart} & $\scriptstyle 1.6{\text e}^{-3}$ & $\scriptstyle 5.8{\text e}^{-3}$ & $\scriptstyle 9.0{\text e}^{-4}$ & $\scriptstyle 7.4{\text e}^{-3}$ & $\scriptstyle 2.8{\text e}^{-4}$ & $\scriptstyle 3.5{\text e}^{-3}$ & $\scriptstyle 2.7{\text e}^{-7}$ & $\scriptstyle 1.5{\text e}^{-5}$\\
    \bottomrule
  \end{tabular}
   \label{tabs}
\end{table}

Finally, to illustrate the multi-axis contour tracking performance of the TV-IMCC, we design a contour consisting of four axial sinusoidal signals, that is $r_1 = \cos t,\,r_2 = \sin t,\,r_3 = 0.1\cos10t$, and $r_4=0.1\sin10t$. Specifically, $r_1$ and $r_2$ are tracked by the axes of the motion stage ($X_1$, $X_2$); while $r_3$ and $r_4$ are tracked by the galvoscanner ($X_3$, $X_4$). The relative output position $y_1-y_3$ composes the X axis coordinate, and $y_2-y_4$ composes the Y axis coordinate of the contour. In this case, $X_1$ is chosen as the master axis, while $X_2, X_3$ and $X_4$ are the slave axes. The parameters of the scanner models $(G_3,H_3,C_3)$ and $(G_4,H_4,C_4)$ are identified as follows
\begin{equation}
\begin{split}
&G_3=[1.7387\quad1;\quad-0.7529\quad 0],\\
&H_3=[0.0270\quad0.0246]^\top,\quad C_3=[1\quad0];\\
&G_4=[1.7261\quad1;\quad-0.7405\quad 0],\\
&H_4=[0.0253\quad0.0228]^\top,\quad C_4=[1\quad0].
\end{split}
\end{equation}
and the focal distance is set as $100$. The stabilizer gains in~(\ref{ob}) for the scanner are given as $K=[-41.15~-26.41]$ and $F=1{\rm e}^{-4}$. We apply the TV-IMCC to the three slave axes, and the tracking results in Fig.~\ref{simumulti} show that the TV-IMCC is able to achieve asymptotic contour tracking for 4-axis systems. Note that the CCC and TCF methods are not applicable in this case, because the decoupling is not available for different axes in the same direction. Therefore, the proposed TV-IMCC is more feasible and effective for multi-axis contouring compared to the CCC and TCF.
\begin{figure}[!t]
  \centering
  \includegraphics[width=0.5\hsize]{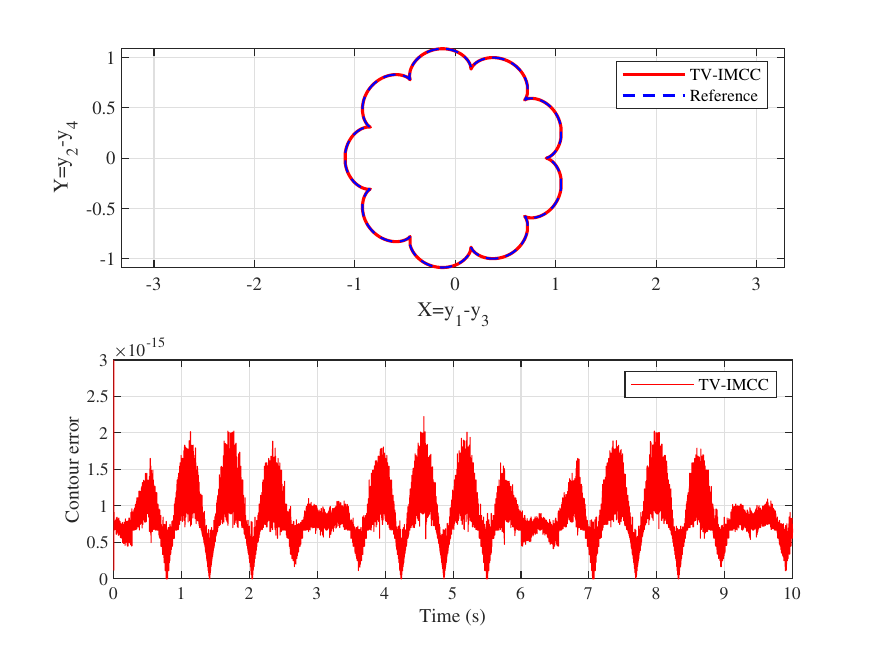}\\
  \caption{Simulation results of tracking a 4-axis contour by TV-IMCC.}\label{simumulti}
\end{figure}
\subsection{Experimental study}
To further demonstrate the feasibility of the proposed TV-IMCC, experiments are also implemented on the system shown in Fig.~\ref{stage}. The detailed parameters of the stage are listed in Table.~\ref{stagep}. A dSPACE$^\text{TM}$ 1103 rapid prototyping system and MATLAB Simulink are used for controller implementation and real-time control executions with the sampling rate of $1$kHz. Note that CCC is not immediately enabled in the experiments for transient stability.

Firstly, a sinusoidal contour %$x_2=\sin 5x_1$
constructed by $r_1=0.1\pi t$ (mm) and $r_2=30\sin0.5\pi t$ (mm) is tracked. From the tracking results in Fig.~\ref{expsine}, it is seen that the PID, CCC and TCF cause error deterioration at certain time intervals, because the axial external disturbances (e.g., the friction) are coupled by the contour error calculation and fedback to each axis. In contrast, the proposed TV-IMCC only controls the slave axis ($X_2$) based on the master axis ($X_1$) position, in other words, the axial external disturbances are not coupled. Hence, there is no such deterioration of the contour error in the tracking results of the TV-IMCC. Furthermore, the TV-IMCC is able to achieve better performance of unknown-structured disturbance rejection in the master axis, according to the simulation results shown in Fig.~\ref{simuidealsine}. For the above reasons, it is seen from Fig.~\ref{expsine} that the contour error of the TV-IMCC is much smaller and more stationary compared to the existing methods.
\begin{figure}[!t]
  \centering
  \includegraphics[width=0.5\hsize]{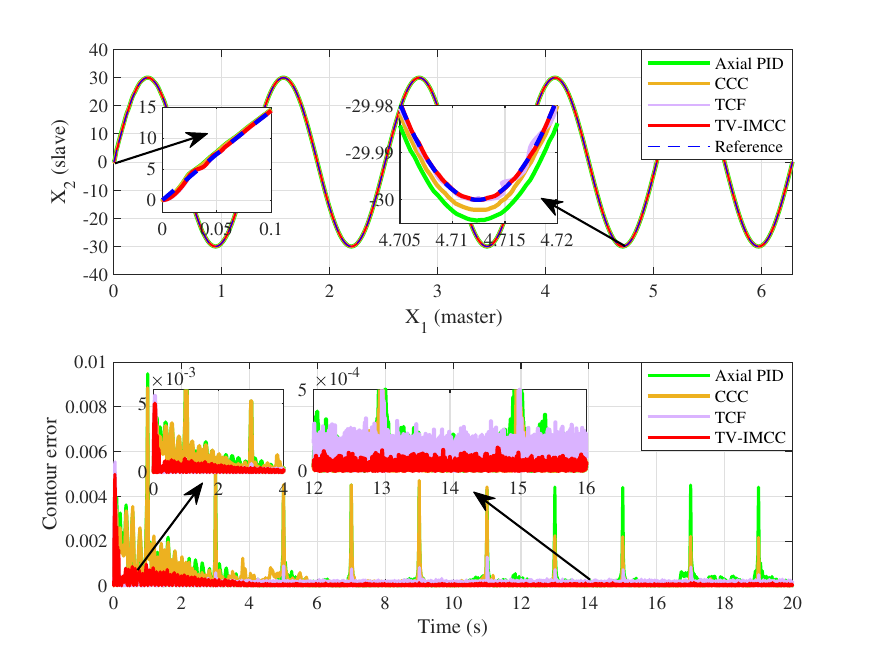}\\
  \caption{Experimental results of tracking a sinusoidal contour by axial PID, CCC, TCF and TV-IMCC.}\label{expsine}
\end{figure}
\begin{table}[!t]
  \centering
     \makeatletter\def\@captype{table}\makeatother\caption{Technical parameters of the XY servo stage.}
  \begin{tabular}{lccc}
    \toprule
    \tabincell{c}{\textbf{Actuators}} & Akribis AUM2-S-S4 \\
    \tabincell{c}{\textbf{Servo drives}} & Copley Accelnet ACJ-090-12 \\
    \tabincell{c}{\textbf{Cross roller guides}} & NB SV4360-35Z \\
    \tabincell{c}{\textbf{Max. stroke}} & $200$mm$\times200$mm \\
    \tabincell{c}{\textbf{Max. velocity}}   & $300$mm/sec for each axis\\
    \tabincell{c}{\textbf{Max. acceleration}}   & $0.5$g for each axis\\
    \tabincell{c}{\textbf{Optical encoders}}   & Renishaw T1011-15A\\
    \tabincell{c}{\textbf{Encoder resolution}}   & $10$nm\\
    \bottomrule
  \end{tabular}
   \label{stagep}
\end{table}
\begin{figure}[!t]
  \centering
  \includegraphics[width=0.5\hsize]{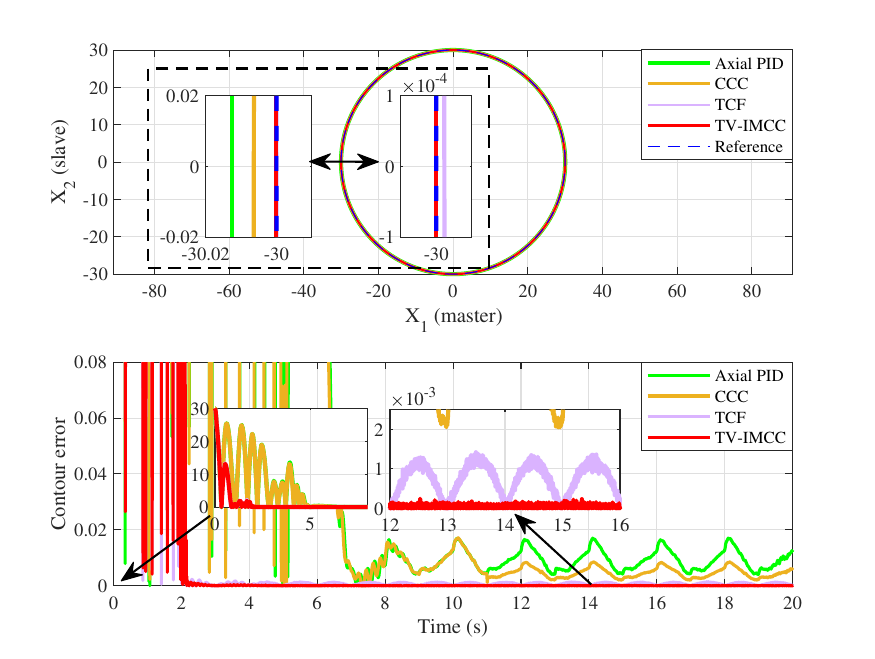}\\
  \caption{Experimental results of tracking a circular contour by axial PID, CCC, TCF and TV-IMCC.}\label{expcircle}
\end{figure}
\begin{figure}[!t]
  \centering
  \includegraphics[width=0.5\hsize]{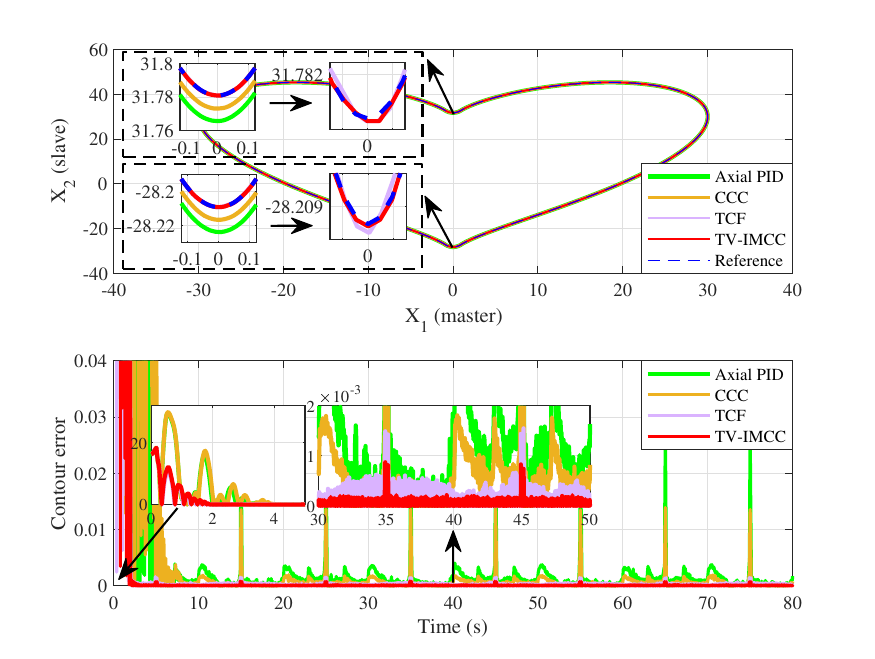}\\
  \caption{Experimental results of tracking a heart contour by axial PID, CCC, TCF and TV-IMCC.}\label{expheart}
\end{figure}

Next, a circular contour with $r_1=30\cos0.5\pi t$ (mm) and $r_2=30\sin0.5\pi t$ (mm) is tracked. From the results shown in Fig.~\ref{expcircle}, it is seen that the contour error of the TV-IMCC is also smaller and more uniform compared to those of the existing methods.

Furthermore, an irregular heart contour with $r_1=30\cos0.1\pi t$ (mm) and $r_2=30(\sin0.1\pi t+(\cos0.1\pi t)^{2/3})$ (mm) is tracked (the contour is smoothed to avoid input saturation). It is seen from Fig.~\ref{expheart} that the proposed TV-IMCC outperforms the existing methods especially around the portions with large curvature of the contour.
\begin{table}[!t]
\tabcolsep=3pt
\renewcommand\arraystretch{1.2}
  \centering
     \makeatletter\def\@captype{table}\makeatother\caption{Experimental contouring performance indices: Error RMS $\varepsilon_{\rm rms}$ and maximum $\varepsilon_{\rm max}$ ($\upmu \rm m$).}
  \begin{tabular}{c|cc|cc|cc|cc}
    \toprule
    Controller & \multicolumn{2}{c|}{PID} & \multicolumn{2}{c|}{CCC} & \multicolumn{2}{c|}{TCF} & \multicolumn{2}{c}{TV-IMCC} \\
    Error index & $\varepsilon_{\rm rms}$ & $\varepsilon_{\rm max}$ & $\varepsilon_{\rm rms}$ & $\varepsilon_{\rm max}$ & $\varepsilon_{\rm rms}$ & $\varepsilon_{\rm max}$ & $\varepsilon_{\rm rms}$ & $\varepsilon_{\rm max}$ \\
    \midrule
    \tabincell{c}{Sinusoidal} & $0.47$ & $4.62$ & $0.23$ & $2.36$ & $0.15$ & $1.44$ & $0.04$ & $0.23$\\
    \tabincell{c}{Circular} & $9.98$ & $17.40$ & $4.99$ & $8.76$ & $0.84$ & $1.54$ & $0.06$ & $0.26$\\
    \tabincell{c}{Heart} & $3.12$ & $28.06$ & $1.53$ & $13.94$ & $0.23$ & $1.59$ & $0.06$ & $0.91$\\
    \bottomrule
  \end{tabular}
   \label{tab1}
\end{table}

The above comparison results of the contouring experiments are consistent with those of the simulations. Moreover, to summarize the experimental contouring indices, the RMS and maximum of the above experimental contour errors are listed in Table~\ref{tab1}. It is obtained that contour error of the TV-IMCC is about two orders of magnitude smaller than that of the PID and CCC, and about one order of magnitude smaller than that of the TCF.

Finally, to demonstrate the multi-axis contouring capability of the TV-IMCC, a 4-axis contouring experiment is conducted on the galvoscanner-stage synchronized motion system shown in Fig.~\ref{stage}. The references $r_1 = 30\cos 0.2\pi t$ (mm) and $r_2 = 30\sin0.2\pi t$ (mm) are for the stage; $r_3 = 3\cos2\pi t$ (mm) and $r_4=3\sin2\pi t$ (mm) are for the galvoscanner. The relative position between the stage and the galvoscanner ($X_{\rm s}-X_{\rm g},\,Y_{\rm s}-Y_{\rm g}$) composes the desired contour. The tracking results are shown in Fig~\ref{exp4axis}, where the contour error (RMS) is $1.60\upmu \rm m$ with equivalent $100$nm resolution of the scanner. Further, the sharp-pointed portions of the contour can be well tracked, thanks to the fast response of the scanner and the synchronized motion manner enabled by the proposed TV-IMCC.
\begin{figure}[!t]
  \centering
  \includegraphics[width=0.5\hsize]{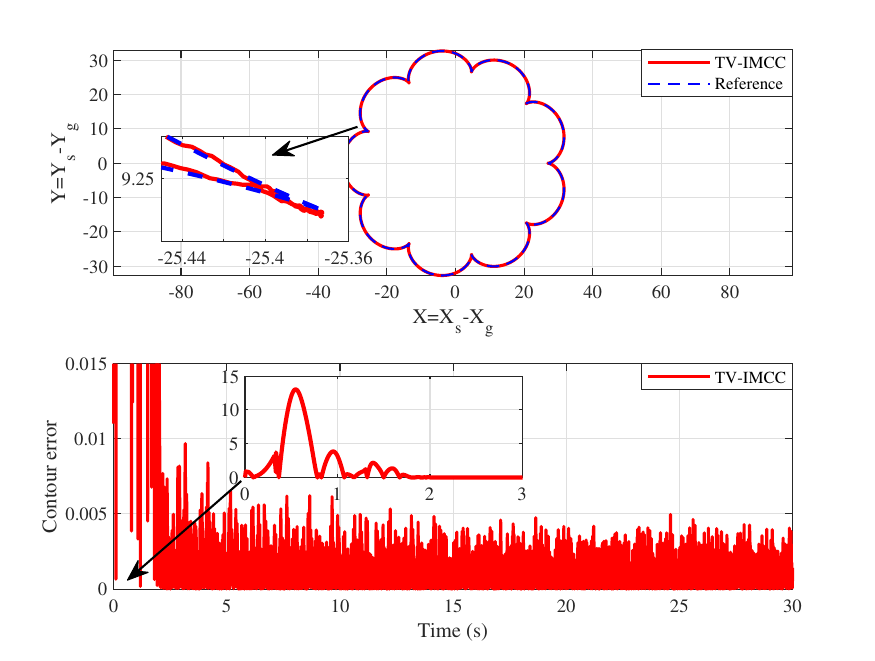}\\
  \caption{Experimental results of tracking a 4-axis contour by the TV-IMCC.}\label{exp4axis}
\end{figure}
\section{Conclusion and future work}\label{conclusion}
In this paper, we have proposed a TV-IMCC methodology to deal with high-precision multi-axis contour tracking.
%Firstly, we have proposed a general signal conversion algorithm which transforms a rotational variable to a monotonic one.
By introducing a general signal conversion algorithm, we have extended a position domain framework such that the multi-axis contour tracking problem is successfully decoupled to individual axial time-varying tracking problem, which is handled by means of a time-varying internal model-based controller.
Furthermore, we have shown the asymptotic tracking performance of the closed-loop system with the proposed TV-IMCC. Numerous simulation and experimental results show that the proposed TV-IMCC outperforms the existing methods for various kinds of contours.

We currently attempt to normalize the augmented system dynamics by utilizing an extended state observer-based design, so that the model uncertainties and external disturbances with unknown structure can be rejected as well.

\section{Acknowledgment}
The authors would like to acknowledge the financial support from the National Natural Science Foundation of China under Grant (52275564, 51875313).

\section{Appendix}
\subsection{Proof of Lemma~\ref{l1}}\label{apx1}
\begin{proof}
Rewrite $\alpha$ as
\begin{equation}\label{l1p1}
\alpha=s\pi+p\,,
\end{equation}
where $p\in [0,\pi]$. Thus
\begin{equation}\label{l1p2}
\begin{array}{rcl}
\cos\alpha&=&\cos(p+s\pi)\\[1mm]
&=&\begin{cases}
\cos p\,,& s~\text{is {\rm even}}\\
\cos (\pi-p)\,,& s~\text{is {\rm odd}}\\
\end{cases}\\[1mm]
&=&\cos(\dfrac{\pi}{2}(1-(-1)^s)+(-1)^sp)\,.
\end{array}
\end{equation}
Substitute~(\ref{l1p2}) into~(\ref{ls1}), yielding
\begin{equation}\label{l1p3}
\begin{array}{rcl}
\alpha&=&s\pi+\dfrac{\pi}{2}(1-(-1)^s)+(-1)^s\arccos(\cos\alpha)\\[3mm]
&=&s\pi+\dfrac{\pi}{2}(1-(-1)^s)\\[3mm]
&&+(-1)^s(\dfrac{\pi}{2}(1-(-1)^s)+(-1)^sp)\\[3mm]
&=&s\pi+p\,,
\end{array}
\end{equation}
which completes the proof.\hfill $\square$
\end{proof}
\subsection{Proof of Proposition~\ref{prop:solutioneqn5}}\label{apx:solutioneqn5}
\begin{proof}
Divide $\theta$ into intervals on $\R$ as
\begin{equation}\label{t1p1}
\theta\in\bigcup_{s\in \N}[s\pi,(s+1)\pi]\,.
\end{equation}
Then, it remains to show the following proposition
\begin{equation}\label{t1p2}
\theta_{\rm e}(k)=\theta(k)\,,~\theta(k)\in[s\pi,(s+1)\pi]\,,
\end{equation}
for any $s\in \N$. We prove it by an induction on $s$. According to Lemma~\ref{l1}, when $s=0$, the proposition is clearly true. Assume that for a particular $n\in\N$, the proposition is true when the single case $s=n$ holds. Hence,
\begin{equation}\label{t1p22}
\theta_{\rm e}(k)=n\pi+\frac{\pi}{2}(1-(-1)^n)+(-1)^n\arccos(\cos\theta(k))=\theta(k)\,,
\end{equation}
and $\theta(k)\in[n\pi,(n+1)\pi]$. Equation~(\ref{t1p22}) can be rewritten as
\begin{equation}\label{t1p3}
\theta_{\rm e}(t)=n\pi+\frac{\pi}{2}(1-(-1)^n)+(-1)^n\arccos(\cos \theta(t))\,.
\end{equation}
The derivative of~(\ref{t1p3}) can be obtained as
\begin{equation}\label{t1p4}
\dot \theta_{\rm e}(t)=(-1)^n\sign{\sin\theta}\dot \theta(t)\,.
\end{equation}
Since $\theta(t)$ is strictly monotonic, it can be obtained that
\begin{equation}\label{t1p44}
(-1)^n\sign{\sin\theta}>0\,,
\end{equation}
when $\theta\in[n\pi,(n+1)\pi]$. Thus, it is easy to know that
\begin{equation}\label{t1p5}
(-1)^n\sign{\sin\theta}<0\,,
\end{equation}
when $\theta\in[(n+1)\pi,(n+2)\pi]$. So, if $\theta_{\rm e}(k)$ is obtained by~(\ref{t1p22}) for $\theta\in[(n+1)\pi,(n+2)\pi]$, $\theta_{\rm e}(k)<\theta_{\rm e}(k-1)$ occurs when the interval of $\theta$ varies. According to Algorithm~\ref{alg1}, index $s$ will be increased by $1$ at this $k$, yielding
\begin{equation}\label{t1p6}
\begin{array}{rcl}
\theta_{\rm e}(k)&=&(n+1)\pi+\dfrac{\pi}{2}(1-(-1)^{n+1})\\[2mm]
&&+(-1)^{n+1}\arccos(\cos\theta(k))\\[2mm]
&=&\theta(k)\,,
\end{array}
\end{equation}
for $\theta(k)\in[(n+1)\pi,(n+2)\pi]$. Therefore the proposition holds for any $s\in \N$, and this completes the proof.\hfill $\square$
\end{proof}
\subsection{Proof of Proposition~\ref{prop:exo}}\label{apx:exo}
\begin{proof}
With equation~(\ref{l}), the variables $y_1$ and $r_2$ can be rewritten in the following form
\begin{equation}\label{x1r2}
\begin{array}{rcl}
y_1&=&l(y_1)\cdot\dfrac{y_1}{\sqrt{y_1^2+f^2(y_1)}}\\[5mm]
r_2&=&l(y_1)\cdot\dfrac{f(y_1)}{\sqrt{y_1^2+f^2(y_1)}}\,.
\end{array}
\end{equation}
According to equation~(\ref{eta}), equation~(\ref{x1r2}) can be rewritten as
\begin{equation}\label{x1r2eta}
\begin{array}{rcl}
y_1&=&l(y_1)\sin\eta(y_1)\\[1mm]
r_2&=&l(y_1)\cos\eta(y_1)\,.
\end{array}
\end{equation}
It is clear that
\begin{equation}\label{x1r2deriv}
\begin{array}{rcl}
r'_2&=&l'(y_1)\cos\eta(y_1)-\eta'(y_1)l(y_1)\sin\eta(y_1)\\[1mm]
y'_1&=&l'(y_1)\sin\eta(y_1)+\eta'(y_1)l(y_1)\cos\eta(y_1)\,.
\end{array}
\end{equation}
Substitute~(\ref{x1r2eta}) into~(\ref{x1r2deriv}), yielding
\begin{equation}\label{x1r2state}
\begin{array}{rcl}
r'_2&=&\dfrac{l'(y_1)}{l(y_1)}r_2-\eta'(y_1)y_1\\[5mm]
y'_1&=&\eta'(y_1)r_2+\dfrac{l'(y_1)}{l(y_1)}y_1\,.
\end{array}
\end{equation}
Let
\[
w(y_1)=\begin{bmatrix}r_2\\y_1\end{bmatrix}\,,
\]
then equation~(\ref{x1r2state}) can be rewritten as
\begin{equation}\label{x1r2ss}
w'(y_1)=\begin{bmatrix} \dfrac{l'(y_1)}{l(y_1)} & -\eta'(y_1)\\\eta'(y_1) &\dfrac{l'(y_1)}{l(y_1)}\end{bmatrix}w(y_1).
\end{equation}
It is obvious that $S(y_1)$ and $Q(y_1)$ are of the form~(\ref{exomodel}), which completes the proof.~\hfill $\square$
\end{proof}
% you can choose not to have a title for an appendix
% if you want by leaving the argument blank
\subsection{Proof of Lemma~\ref{polytype}}\label{apx2}
\begin{proof}
Based on the Lyapunov inequality in the time-varying system, it is known that the feedback gain $K(k)$ can be designed by finding a symmetric positive-definite matrix $P(k)$ satisfying
\begin{equation}\label{lyatv}
A_c^{\rm T}(k)P(k+1)A_c(k)-P(k)<0\,,
\end{equation}
where $A_c(k)=A(k)+B(k)K(k)$. To make the presentation concise, we drop index $k$, while keeping $k+1$ in the corresponding matrices. From inequality~(\ref{lyatv}), note that
\begin{equation}\label{dv3}
\begin{split}
&P-A_{\rm c}^{\rm T}P(k+1)A_{\rm c}\\[2mm]
&=P-\left[A_{\rm c}^{\rm T}P^{\rm T}(k+1)\right]P^{-1}(k+1)\left[P(k+1)A_{\rm c}\right]>0\,.
\end{split}
\end{equation}
Applying the Schur complement to inequality~(\ref{dv3}) yields
\begin{equation}\label{lmi2}
\begin{bmatrix} P& A_c^{\rm T}P(k+1) \\P(k+1)A_{\rm c} & P(k+1)\end{bmatrix}>0\,.
\end{equation}
Therefore, the rest of the proof is to solve inequality~(\ref{lmi2}) by using Proposition~\ref{a3}, which can be referred to Theorems~3 and~4 in~\cite{daafouz2001}.\hfill $\square$
\end{proof}

\bibliographystyle{unsrt}
%\bibliography{references}  %%% Remove comment to use the external .bib file (using bibtex).
%%% and comment out the ``thebibliography'' section.

\bibliography{references}

\end{document}